\documentclass[11pt]{article}
\usepackage[margin=1in]{geometry}
\usepackage{amsfonts}
\usepackage{amsmath,mathtools}
\usepackage{amssymb}
\usepackage{amsthm,xcolor,biblatex}

\newtheorem{lem}{Lemma}
\newtheorem{thm}[lem]{Theorem}
\newtheorem{conj}[lem]{Conjecture}
\newtheorem{prop}[lem]{Proposition}

\pagestyle{plain} 

\usepackage{enumitem}
\usepackage{mathtools}
\usepackage{microtype}
\usepackage{graphicx}
\usepackage{tabu}
\usepackage{tikz-cd}
\usepackage{titlesec,hyperref}
\hypersetup{
    colorlinks=true,
    linkcolor=blue,
    filecolor=magenta, 
    citecolor=red,
    urlcolor=cyan,
    pdftitle={Cryptographic Hash Functions from Markoff Triples},
    pdfpagemode=FullScreen,
}

\newcommand{\A}{\mathbb{A}}
\newcommand{\C}{\mathbb{C}}
\newcommand{\T}{\mathbb{T}}
\newcommand{\SL}{\mathrm{SL}}

\newcommand{\F}{\mathbb{F}}
\newcommand{\Hom}{\operatorname{Hom}}

\newcommand{\ip}[1]{\langle#1\rangle}

\newcommand{\leg}[2]{\of{\frac{#1}{#2}}}
\newcommand{\Mod}[1]{\ (\mathrm{mod}\ #1)}
\newcommand{\of}[1]{\left(#1\right)}
\newcommand{\ov}[1]{\overline{#1}}
\newcommand{\Q}{\mathbb{Q}}

\newcommand{\rot}{\operatorname{rot}}

\newcommand{\Z}{\mathbb{Z}}








\usepackage{authblk}

\title{A Cryptographic Hash Function from Markoff Triples}
\author[1]{Elena Fuchs}
\author[2]{Kristin Lauter}
\author[1]{Matthew Litman}
\author[1]{Austin Tran}

\affil[1]{{\small Department of Mathematics, University of California, Davis}}
\affil[2]{{\small Facebook AI Research, Seattle, WA}}



\begin{document}
\maketitle
\begin{abstract}
    Cryptographic hash functions from expander graphs were proposed by Charles, Goren, and Lauter in \cite{CGL} based on the hardness of finding paths in the graph.  In this paper, we propose a new candidate for a hash function based on the hardness of finding paths in the graph of Markoff triples modulo $p$. These graphs have been studied extensively in number theory and various other fields, and yet finding paths in the graphs remains difficult. We discuss the hardness of finding paths between points, based on the structure of the Markoff graphs. We investigate several possible avenues for attack and estimate their running time to be greater than $O(p)$. In particular, we analyze a recent groundbreaking proof in \cite{BGS1} that such graphs are connected and discuss how this proof gives an algorithm for finding paths.\\ \\
    \textit{Keywords}: Markoff triples, Cryptographic hash functions \\
    \textit{MSC}: 11T71, 94A60, 05C48
\end{abstract}

\setcounter{tocdepth}{2}
\tableofcontents

\section{Introduction}
In this work, we introduce a proposal for a hash function based on the hardness of finding paths in the graph of Markoff triples, which we will define.  The idea of using the hardness of path-finding in graphs to define cryptosystems was introduced at the NIST Hash function workshop in 2005~\cite{CGL}. The paper~\cite{CGL} proposed two different candidate families of Ramanujan graphs: 1) LPS Cayley graphs, and 2) Supersingular Isogeny Graphs.  The LPS-based hash function was attacked in two subsequent papers, which presented efficient algorithms to find collisions \cite{TZ}, and preimages \cite{PLQ}. Path-finding in Supersingular Isogeny Graphs remains a hard problem in cryptography so far, and is the basis for the SIDH Key Exchange Protocol~\cite{JFP,CFLMP} in the third round of the NIST PQC competition.

In this paper, we propose to use graphs based on solutions to Markoff's equation to construct a new cryptographic hash function, and discuss why it appears that these graphs may be good candidates. Our main focus will  be to evaluate the path-finding algorithm that can be extracted from the proof of Bourgain-Gamburd-Sarnak in \cite{BGS1} that these graphs are connected in most cases, as this is currently the only certain way to find paths in these graphs in general. We will hence go into the details of the proof in \cite{BGS1} and how it yields an algorithm to find paths, as well as explore some other potential attacks in Section~\ref{Attack2} and \ref{otherattacks}.

\subsection{Markoff tree and graph}
Consider solutions in $(\mathbb Z_{\geq0})^3\backslash\{(0,0,0)\}$ to
\begin{equation}\label{markoffeq}
x_1^2+x_2^2+x_3^2-3x_1x_2x_3=0.
\end{equation}
Equation (\ref{markoffeq}) is known as the \emph{Markoff equation}, and its solutions are called \emph{Markoff triples}, with the integers that occur as members of some triples known as Markoff numbers. As we discuss below, a lot is known about Markoff numbers, and one particularly useful observation, both to us and more generally in the arithmetic study of Markoff numbers, is that one can generate all such triples by considering the orbit of the group generated by the involutions
$$R_1(x_1,x_2,x_3)=(3x_2x_3-x_1,x_2,x_3)$$
$$R_2(x_1,x_2,x_3)=(x_1,3x_1x_3-x_2,x_3)$$
$$R_3(x_1,x_2,x_3)=(x_1,x_2,3x_1x_2-x_3)$$
acting on the triple $(1,1,1)$ \cite{M1},\cite{M2}. In this way, one can view the set of Markoff triples as a tree as depicted in Figure~\ref{Markofftree}. Note that the tree depicted in Figure~\ref{Markofftree} shows one of several very similar branches of the tree, the others are generated by acting on $(1,1,1)$ via $R_1$ and $R_2$, as well as letting one other involution act on the triple immediately adjacent to $(1,1,1)$ (in the figure, that other involution would be $R_1$ acting on $(1,1,2)$). Those branches will simply contain permutations of the triples shown in Figure~\ref{Markofftree}.

\begin{figure}[h!] 
    \centering
    \includegraphics[scale=.45]{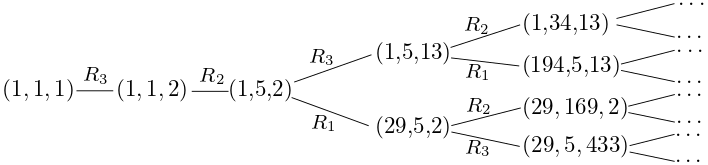}
    \caption{A branch of the Markoff tree generated by applying the involutions $R_1,R_2,R_3$ to the fundamental solution (1,1,1).}
    \label{Markofftree}
\end{figure}

Markoff triples first appeared in the literature in Markoff's master's thesis \cite{M1}, \cite{M2} in the context of studying rational approximations via continued fractions. Markoff found that the sequence of Markoff numbers plays a big role in results that produce infinitely many irrationals $\alpha$ which admit a continued fraction convergent $p/q$ such that
$$\left|\alpha-\frac{p}{q}\right|<\frac{m}{\sqrt{d}q^2}$$
for appropriate values of $m$ and $d$ which are connected to the Markoff sequence. Moreover, he was able to determine exactly for which $\alpha$ the above bound is sharp. Markoff's work on the subject inspired a whole series of generalizations of his result which introduced novel techniques into the theory which are still used today, e.g. \cite{RS}. Indeed, today the Markoff equation in (\ref{markoffeq}) is of interest not only to those studying continued fractions but has become an important object in other branches of mathematics such as algebraic geometry and theoretical physics \cite{Ch},\cite{LT}.

For cryptographic applications, we will consider what is, roughly speaking, the mod-$p$ reduction of this tree, as well as a related graph where edges are defined slightly differently.

Specifically, let $p$ be a (large) prime, and consider the set of nonzero solutions modulo $p$ to equation (\ref{markoffeq}). We call a solution $(x_1,x_2,x_3)$ in $(\mathbb F_p)^3$ a \emph{triple}, and each entry in the triple $x_1$, $x_2$, or $x_3$, a \emph{coordinate}; so a coordinate is simply an element of $\mathbb F_p$.

We consider two graphs: $G_p$ and $\hat{G}_p$. In both of these graphs, the vertices are comprised of nontrivial (we exclude $(0,0,0)$) solutions modulo $p$. In $G_p$, the edges are defined by the involutions $R_1,R_2,R_3$: two triples are connected by an edge if one of the three involutions takes one triple to the other. We will also refer to this graph as the involution graph. In $\hat{G}_p$, the edges are defined by  \emph{rotations} (see Section \ref{rotsection}). Explicitly, they are given by
\begin{equation}\label{rotations}
\rot_i=\tau_{i+1,i+2}\circ R_{i+1}
\end{equation}
Here $\tau$ is a transposition of coordinates, and all index additions are done modulo 3. Two triples are connected by an edge in $\hat{G}_p$ if one of the 3 rotations takes one triple to the other. We refer to this as the rotation graph.

Our reason for considering $G_p$ is that it is particularly convenient for setting up our hash function. It is the rotation graph $\hat{G}_p$, however, for which Bourgain-Gamburd-Sarnak prove connectivity and in fact give a path-finding algorithm. Notably, finding paths in the graph $\hat{G}_p$ is easily correlated to finding paths in $G_p$, and vice versa.

Specifically, one can check that, given three different indices $1\leq i,j,k\leq 3$, and a triple $(a,b,c)$, one has $R_i(a,b,c)=\tau_{j,i}R_j\tau_{k,j}R_k\tau_{i,k}R_i(a,b,c)$, so a path of length $\ell$ between two triples in $G_p$ corresponds to a path of length $3\ell$ in $\hat{G}_p$, whereas a path of length $\ell$ between two triples in $\hat{G}_p$ corresponds to a path of at length at least $\ell/3$, and possibly much longer, in $G_p$. In other words, an algorithm to find paths in $\hat{G}_p$ will find paths in the other graph in time not significantly shorter, and possibly much longer, time.


\begin{figure} 
    \centering
    \includegraphics[scale=.13]{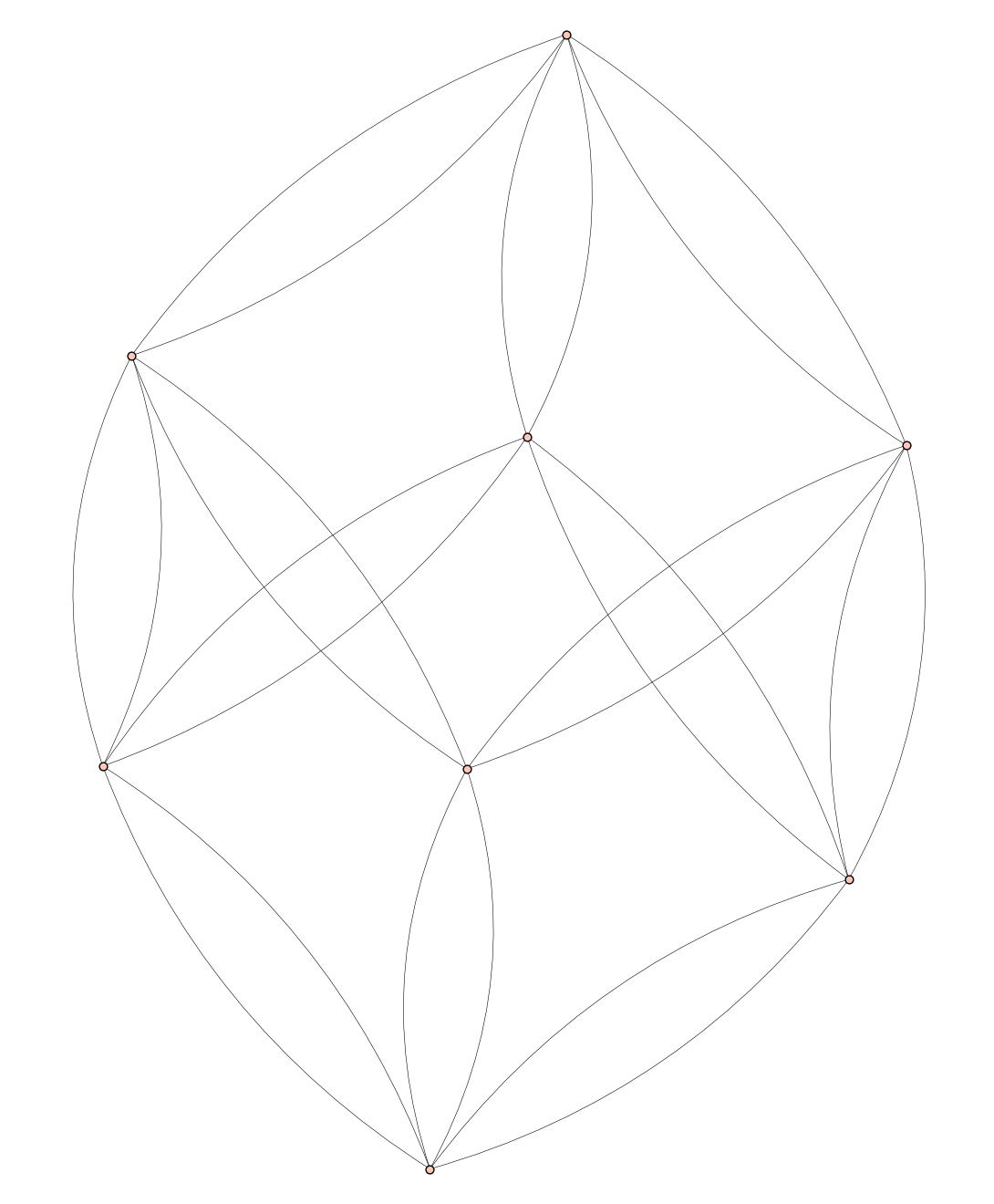}\quad
    \includegraphics[scale=.13]{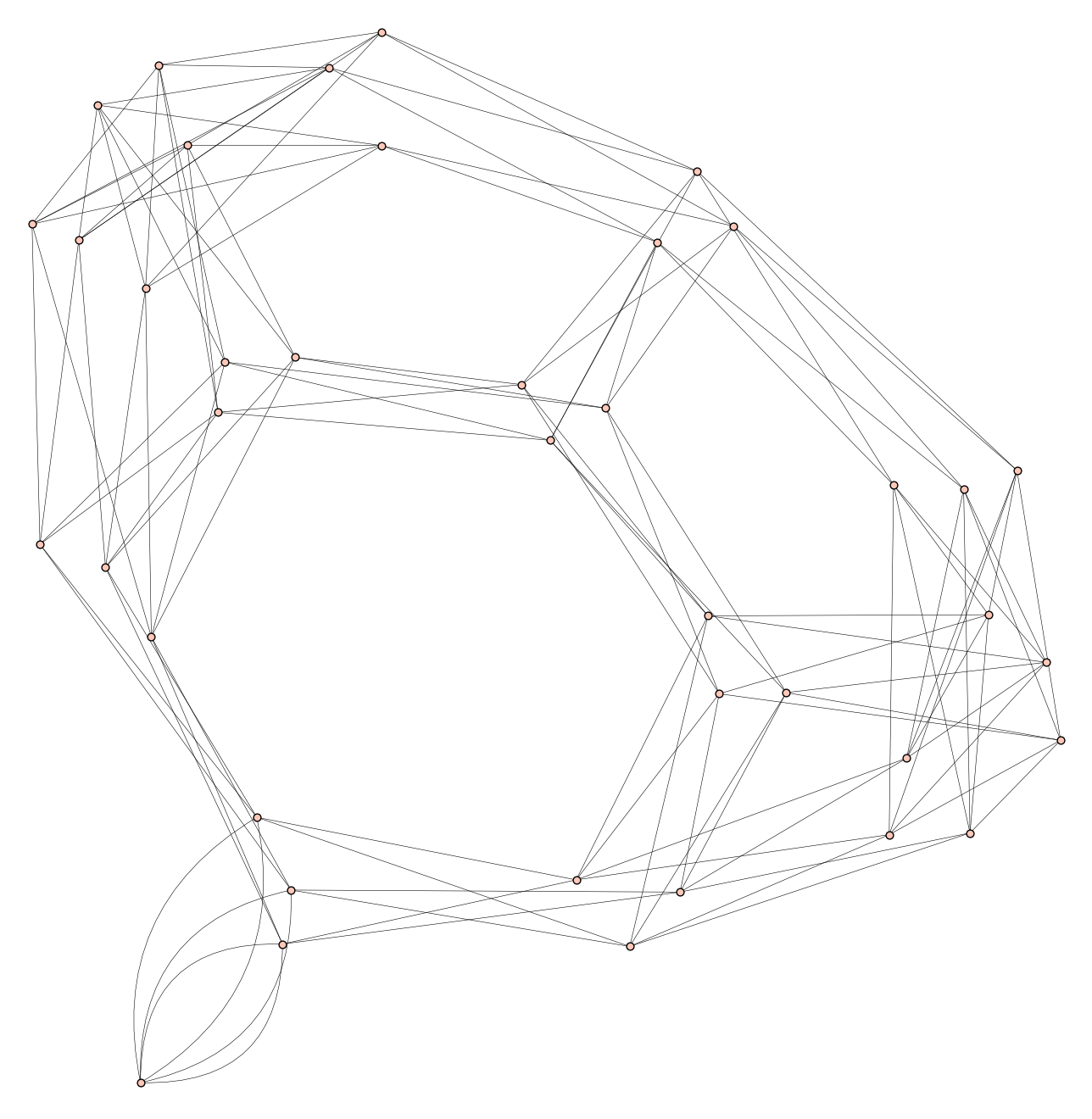}\quad
    \includegraphics[scale=.13]{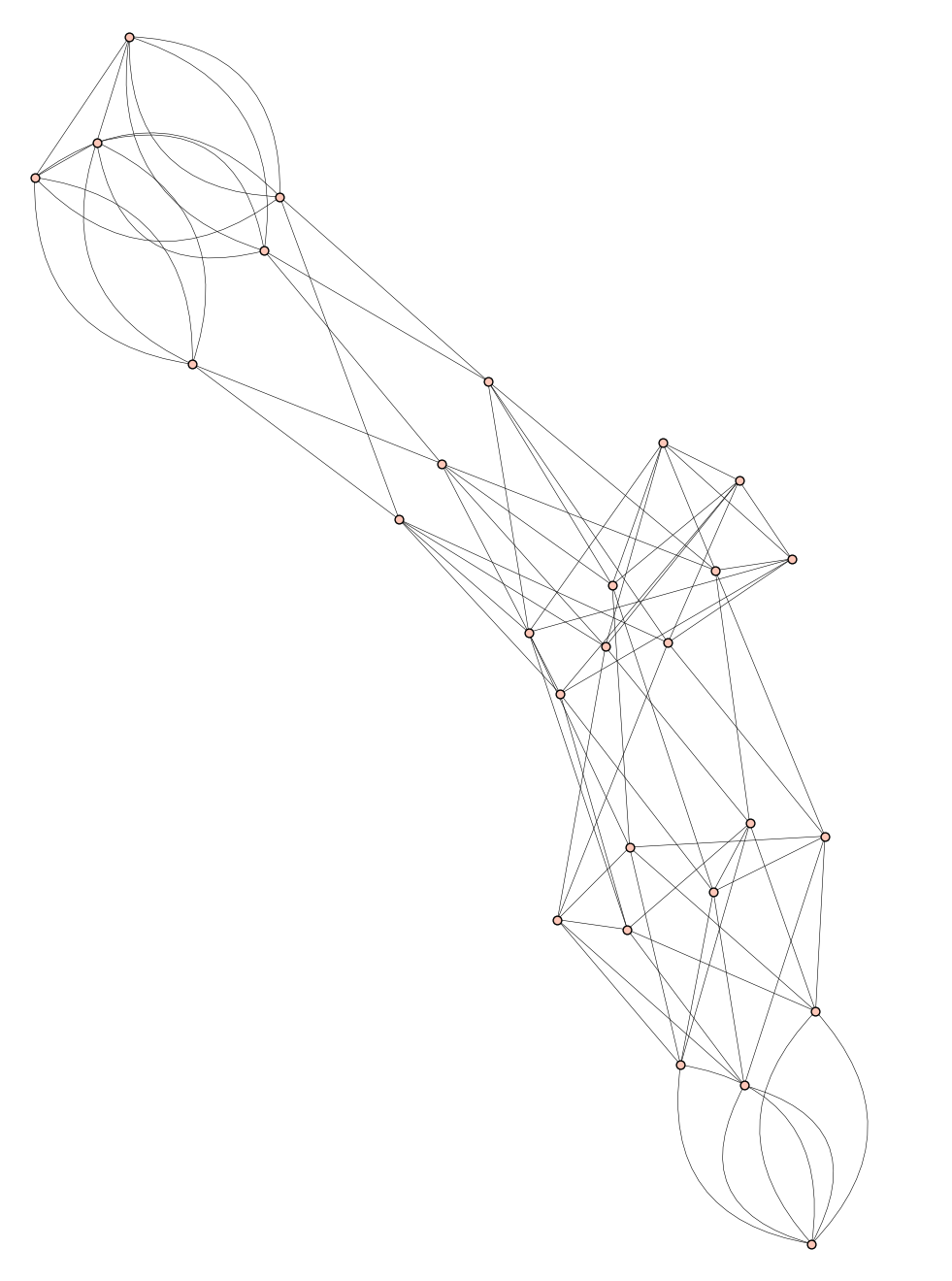}
    \caption{The Markoff mod-$p$ graphs $\hat{G}_p$ for $p=3$, $5$, and $7$.}
    \label{Markofftreemodp}
\end{figure}


\subsection{Cryptographic hash function}
First we give a brief summary of the hash function we propose, with more details given in Section~\ref{cryptsection}.
We choose a large prime $p$: typical cryptographic size primes have at least $256$ bits. There are some restrictions on the choice of $p$, see the discussion in the following section. We then label in $G_p$ the edges corresponding to the involutions $R_1$, $R_2$, and $R_3$, respectively. The input to the hash function is a bit string $b_0b_1b_2\ldots$.  To compute the output of the hash function, start at a fixed vertex, such as the vertex $(1,1,1)$, and take a walk in the graph according to the directions in the bit string, reading the string bit by bit, one bit for each step of the walk. The output is the vertex at which this process ends.

The security of this hash function depends on the hardness of finding paths between two vertices in the graph $G_p$. This paper is concerned primarily with exploring potential avenues for attack and estimating their running times.

\subsection{Avenues for attack}
In Sections~\ref{BGSattack} and \ref{Attack2}, we describe two potential attacks on our hash function. Both give ways of finding paths between two triples in the graph $G_p$.

The work of Bourgain, Gamburd, and Sarnak in \cite{BGS1} gives a potential attack on the cryptosystem we have proposed, but we will show that its running time is heuristically $O(p)$. Specifically, in order to show that the graphs $\hat{G}_p$ are connected, their algorithm gives a way to construct a path between any two vertices in the Markoff graph mod $p$ for most primes $p$. The very rough idea is to associate to every vertex a certain ``order" that we describe in Section \ref{rotsection}. Those vertices of maximal possible order are grouped into what Bourgain-Gamburd-Sarnak call \emph{the cage}, which they prove is connected. They then show a way to connect any other vertex to the cage by walking along the graph while increasing the so-called order of the vertex until the cage is reached. We call this the BGS algorithm for finding paths in $\hat{G}_p$.

Not surprisingly, this is easiest to do when the order of the vertex in question is already quite large, and they call the process of passing from such a vertex to the cage the ``End Game." Those points whose order is larger than a small power of $p$ but not as large as the vertices involved in the End Game are treated separately in what Bourgain, Gamburd, and Sarnak call the ``Middle Game."  Finally, to move from vertices of very small order into the Middle Game and beyond, they employ rather technical methods in what they call the ``Opening." Because it is central to understanding the potential attack that it gives on our cryptosystem, and because some of the details that are important to us are left to the reader in \cite{BGS1}, we carefully describe these three parts of Bourgain, Gamburd, and Sarnak's proof of Theorem \ref{BGSmain} and then give a heuristic for the running time in Section \ref{cryptheur}. We show the following.

\begin{prop}\label{pathlengthprop}
The length of the path given by the BGS algorithm is bounded from above by
$$O\of{p\log\log{p}}.$$
\end{prop}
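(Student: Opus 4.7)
The plan is to decompose any path produced by the BGS algorithm into three consecutive segments corresponding to the three phases of the proof — the Opening, the Middle Game, and the End Game — bound the number of edges in each segment separately, and then sum. Since connecting two arbitrary triples by the BGS procedure amounts to concatenating two such walks (one for each endpoint) and joining them inside the cage, the total path length is controlled by (twice) the sum of the three phase-contributions plus the cost of traversing the cage.

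For the Opening, one begins at a vertex of potentially tiny order and walks until the order exceeds a small power of $p$. The technical estimates in the Opening show that at each step the ``quality parameter'' recorded by the algorithm can be forced to grow by a bounded multiplicative factor, so reaching the Middle Game threshold takes $O(\log p)$ steps and is polylogarithmically negligible against the final bound. For the Middle Game, one amplifies the order from a small power of $p$ up to something close to $p$; each amplification step is realized by a short word in the $R_i$ of bounded length, and the number of amplifications needed is again $O(\log p)$, so this phase also contributes at most a polylogarithmic number of edges.

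The dominant contribution, and the main obstacle, comes from the End Game together with the traversal inside the cage. The cage contains on the order of $p$ vertices, and although \cite{BGS1} proves it connected, the BGS prescription for moving within it between two specified triples essentially amounts to a walk whose length is controlled by the cage diameter. I would bound this diameter by $O(p)$ using the explicit rotation action described in Section~\ref{rotsection}, which acts on cage coordinates essentially by multiplication in a cyclic group of order about $p$. The extra $\log\log p$ factor, I expect, arises from needing to simultaneously meet a multiplicative-order condition on one coordinate during the approach to the cage, and the frequency of elements of the required order is controlled by character-sum estimates whose error term carries a $\log\log p$-type loss (of the same flavor one sees in bounds on smooth-number densities).

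Once the three phases are bounded, the proof is essentially bookkeeping: the Opening and Middle Game contribute $O(\log p)$ each, the End Game and cage traversal together contribute $O(p\log\log p)$, and summing yields the claimed bound. The hard part will be making the End-Game/cage-traversal estimate fully quantitative, since \cite{BGS1} establishes connectivity of the cage only qualitatively in several places; this requires unwinding the constructive content of their argument to extract an explicit diameter bound of the form $O(p\log\log p)$.
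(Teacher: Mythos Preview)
Your decomposition into Opening/Middle Game/End Game is reasonable, but you have the costs of the phases essentially inverted, and the origin of the $\log\log p$ factor is misidentified.

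In the BGS algorithm, each ``amplification step'' of the Middle Game is \emph{not} a short word of bounded length in the $R_i$. To pass from a triple $X$ of order $d$ to a triple of strictly larger order, the algorithm walks around the maximal orbit $M_X$, which has size $d$, until it locates a vertex of higher order. Thus a single amplification step costs up to $d$ edges, where $d$ is the current order (a divisor of $p-1$ or $p+1$). Summing over the chain of increasing orders gives a total cost for the approach to the cage bounded by
\[
\sum_{d\mid p-1} d \;+\; \sum_{d\mid p+1} d \;=\; \sigma(p-1)+\sigma(p+1).
\]
Gronwall's theorem gives $\sigma(n)\ll n\log\log n$, and \emph{this} is where the $\log\log p$ enters --- not from character-sum error terms or smooth-number densities. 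So your $O(\log p)$ bound for the Middle Game is off by a factor of essentially $p$, and your Opening bound has the same defect.

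Conversely, the cage traversal is the easy part, not the hard one. Proposition~\ref{incprop} shows the incidence graph has diameter $2$, so any two cage points are joined by at most three full orbits, each of length at most $p+1$; hence the cage contributes only $O(p)$ edges, with no $\log\log p$ factor. There is no need to ``unwind the constructive content'' of the cage-connectivity argument: the diameter-$2$ statement is already explicit and quantitative.

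In summary: the dominant term $O(p\log\log p)$ comes from the sum-of-divisors bound on the total length of the order-increasing walk toward the cage, while the cage itself costs only $O(p)$. Your proposal has these reversed and does not account for the orbit-length cost of each Middle Game step.
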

See section~\ref{cryptheur} for the derivation. From this, we deduce the time complexity of the algorithm.

\begin{prop}\label{timecomplexityBGS}
The time complexity of the BGS algorithm is at most
$$O\of{p\log\log{p}}.$$
\end{prop}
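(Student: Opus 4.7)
The plan is to deduce the time complexity bound directly from Proposition \ref{pathlengthprop} by arguing that each individual step of the BGS algorithm costs only $O(1)$ arithmetic operations in $\mathbb{F}_p$; the total running time is then at most the product of the per-step cost and the path length.

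First I would unpack what a single ``step'' of the algorithm actually entails in each of the three phases (Opening, Middle Game, End Game) described in Section \ref{cryptheur}. At each step the state is a triple $(x_1,x_2,x_3)\in\F_p^3$, and we must (i) decide which of the three rotations $\rot_1,\rot_2,\rot_3$ to apply, then (ii) apply it. Step (ii) is manifestly $O(1)$ field operations, since each $\rot_i$ is given by a low-degree polynomial map on $\F_p^3$. For step (i), I would check phase by phase that the selection rule prescribed in \cite{BGS1} depends on only a bounded amount of data extracted from the current triple: in the End Game one chooses the rotation that drives the order upward toward the cage, which can be read off from whether certain coordinates vanish or which branch of a quadratic they lie on; the Middle Game rule is similar. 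Both amount to $O(1)$ field operations, plus at most a constant number of Legendre symbol computations which, even counted honestly, cost $O(\log^{1+o(1)} p)$ and are absorbed under the standard convention that an $\F_p$-operation is $O(1)$.

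The main obstacle will be verifying the per-step cost in the Opening phase, which is the most technically delicate portion of the Bourgain--Gamburd--Sarnak argument and handles vertices of very small order. The hard part is to confirm that the rules used there really do reduce to a bounded amount of casework on the current triple and do not conceal any search whose size grows with $p$. Once this is checked, combining the per-step cost $O(1)$ with the path-length bound $O(p\log\log p)$ from Proposition \ref{pathlengthprop} gives the total time complexity $O(p\log\log p)$, as claimed.
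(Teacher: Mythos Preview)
Your approach is essentially the paper's: the time bound is obtained directly from the path-length bound of Proposition~\ref{pathlengthprop} together with the observation that each step of the walk costs $O(1)$ field operations, and the paper's entire derivation of Proposition~\ref{timecomplexityBGS} is in fact the single clause ``From this, we deduce the time complexity of the algorithm'' following the path-length argument in Section~\ref{cryptheur}. One small point: your worry about the Opening is moot here, since the paper explicitly notes that the Opening portion of \cite{BGS1} is non-constructive and is excluded from the algorithmic analysis; and your description of the per-step decision rule (reading off a quadratic branch) is not quite how the algorithm proceeds---one simply walks along the current maximal orbit, applying the same rotation repeatedly and checking rotation orders until a larger one appears---but this does not affect the $O(1)$-per-step conclusion once one adopts the paper's convention of counting $\F_p$-operations as unit cost.
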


We also provide data in Section \ref{cryptheur} showing that these upper bounds are in fact of the right order, which supports the heuristics. We observe that the path is longest modulo primes $p$ for which $p^2-1$ is \emph{very smooth} (see discussion in Section \ref{smoothdef}). Note that, if a faster way of finding paths between vertices in $G_p$ or $\hat{G}_p$ exists, this would be both of interest to the study of our hash function, and to the study of the arithmetic of Markoff numbers: it could provide a second proof of Theorem \ref{BGSmain} that works even for those primes that Bourgain, Gamburd, and Sarnak could not handle.

Related to this first attack, we also present a seemingly simple attack which would use the fact that, as one walks without back-tracking along the Markoff tree depicted in Figure~\ref{Markofftree}, the coordinates of the triple increase. This makes it trivial to find paths between vertices in this tree: simply walk from the two vertices in question along edges that decrease coordinates until one gets to $(1,1,1)$. If it were easily possible, given a vertex $(x_1,x_2,x_3)$ in $G_p$ to lift it to a vertex in the infinite Markoff tree which reduces to $(x_1,x_2,x_3)$ modulo $p$, then one could simply connect the two vertices in question by connecting their lifts in the infinite tree, and then transferring this path back to $G_p$. However, in Section \ref{Attack2}, where we describe this attack more carefully, we explain the obstacles to this approach. In particular, an efficient algorithm to lift would yield another proof of Theorem \ref{BGSmain}: an algorithm to lift any triple in $G_p$ to one in the infinite tree (which is known to be connected) in fact shows that $G_p$ itself, and, by the previous discussion of how the two graphs are related, $\hat{G}_p$ is connected. We conjecture the following. 

\begin{conj}\label{LiftingConjecture} The length of a path found by lifting a triple in $G_p$ to a triple in the Markoff tree over $\mathbb Z$ is at least $\textrm{O}(\log p)$ for most triples in $G_p$.
\end{conj}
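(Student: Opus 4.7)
The plan is a straightforward pigeonhole argument comparing the at-most-binary branching of the integer Markoff tree to the quadratic cardinality of $G_p$.

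First, one records that standard point-counting on the affine cubic surface cut out by equation~(\ref{markoffeq}) gives $|G_p| = p^2 + O(p)$, so $|G_p| \gg p^2$ for large $p$. Next, one bounds the size of the tree up to depth $n$: the root $(1,1,1)$ has three children, and every non-root vertex has exactly two non-parent neighbors, since the involution that produced the parent edge simply reverses it. Hence the number of ordered integer Markoff triples at depth exactly $k\geq 1$ equals $3\cdot 2^{k-1}$, and the total count at depth at most $n$ is $3\cdot 2^n-2$.

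Now, reduction modulo $p$ sends each such integer triple to a point of $G_p$. Allowing the six coordinate permutations an attacker could invoke, the subset $S_n\subset G_p$ of triples admitting some integer lift of tree-depth at most $n$ satisfies $|S_n|\leq 18\cdot 2^n$. Fixing any constant $c<2$ and setting $n=\lfloor c\log_2 p\rfloor$ gives $|S_n|=O(p^c)=o(p^2)=o(|G_p|)$. Consequently the density of triples in $G_p$ with an integer lift of depth $\leq c\log_2 p$ tends to $0$ as $p\to\infty$, so most triples require a lift of depth $\Omega(\log p)$. Because the path produced by the lifting attack is exactly the edge-path from the lifted triple down to $(1,1,1)$ (and this length is preserved upon reducing modulo $p$), its length equals the tree-depth, yielding the claimed bound.

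The main point requiring care, rather than a genuine obstacle, is aligning the count with the precise attack model. One should (i) confirm that "lifting" allows only coordinate permutations, since permuted lifts at worst enlarge $S_n$ by the harmless factor of $6$; (ii) handle triples that possess no integer lift at all, on which the conjecture is vacuous; and (iii) verify that $|G_p|=p^2+O(p)$ holds uniformly in $p$, which is standard for a cubic surface with isolated singularities. The argument can moreover be sharpened into an explicit density statement: for every $\epsilon>0$, at least $(1-\epsilon)|G_p|$ triples have lift-depth at least $c_\epsilon\log p$, with $c_\epsilon$ computed by balancing $18\cdot 2^{c\log_2 p}$ against $\epsilon p^2$.
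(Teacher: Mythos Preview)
Your pigeonhole argument is correct and, in fact, turns the statement into a theorem rather than a conjecture. The paper does \emph{not} prove Conjecture~\ref{LiftingConjecture}; it is stated as a conjecture and left open. So there is no ``paper's proof'' to compare against.

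That said, the heuristic in Section~\ref{Attack2} is essentially the same pigeonhole, parametrised differently. The paper invokes Zagier's count that the number of integer Markoff triples with largest coordinate at most $T$ is $\sim C(\log T)^2$, compares with $|G_p|\sim p^2$, and concludes that a lift must have coordinates of size roughly $e^{p}$. You instead count by tree depth: at most $3\cdot 2^n-2$ non-backtracking walks of length $\le n$ from $(1,1,1)$, hence at most that many possible reductions in $G_p$, forcing $n\gtrsim 2\log_2 p$ for the image to have positive density. Your parametrisation is the right one for the conjecture as stated, since the lifting attack's path length is literally the depth, not the coordinate size; the paper's version would require the extra (true but unstated) step of converting ``coordinates $\sim e^p$'' into ``depth $\gtrsim\log p$.'' In short, you have supplied the missing line that upgrades the paper's heuristic to a proof of the conjecture.

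Two minor comments on the write-up. First, the bound $3\cdot 2^n-2$ is already an upper bound on the number of endpoints of length-$\le n$ non-backtracking walks, so you do not need to know that the ordered integer graph is a tree (it is, but you can sidestep the issue). Second, the factor of $6$ for permutations is unnecessary once you work with the ordered tree and ordered $G_p$, since the attack and the graph are both defined on ordered triples; including it is harmless but not required.
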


Note that this does not take into account the difficulty of actually finding the lift  (or this path of length $\log p$). Hence the running time of this attack is likely comparable to the one based on the BGS algorithm.

\subsection{Some background on Markoff triples}
We now note a few important facts about the graphs $G_p$. First of all, it is known that $|G_p|=|\hat{G}_p|=p^2+\left(\frac{-1}{p}\right)\cdot 3p$ if $p>3$, which is mentioned in \cite{dCM} without proof. Here $\left(\frac{\ast}{\ast}\right)$ denotes the Legendre symbol. One way to prove this is to think of the left side of the Markoff equation as a quadratic form in one of the variables $x_1,x_2,x_3$, and then consider how many representations of $0$ there are mod $p$, which is a well known problem. Furthermore, Meiri and Puder have proven the following.
\begin{thm}[Meiri, Puder \cite{MP}]
Let $G_p$ be as above and let $\Gamma_p$ be the finite permutation group induced by the action of $\Gamma=\langle R_1,R_2,R_3\rangle$ on $G_p$. Then, outside a zero-density subset of all primes, the group $\Gamma_p$ is either the full symmetric group or the alternating group on the vertices of $G_p$.  
\end{thm}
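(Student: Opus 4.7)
The plan is to reduce the theorem to three standard verifications for the action of $\Gamma_p$ on $G_p$, and then to appeal to a classical theorem of Jordan (sharpened by Marggraff): any primitive permutation group of degree $n$ containing a cycle of prime length $q \le n-3$ must contain the alternating group. So the program is to verify, in order: (i) transitivity of $\Gamma_p$ on $G_p$, (ii) primitivity, and (iii) the existence of an element of $\Gamma_p$ whose cycle type contains such a short cycle.

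Transitivity is essentially free: the Bourgain-Gamburd-Sarnak theorem gives connectivity of $\hat{G}_p$ outside a zero-density set of primes, and since each rotation $\mathrm{rot}_i = \tau_{i+1,i+2} R_{i+1}$ lies in $\Gamma$, this connectivity at once yields transitivity of $\Gamma_p$ on $G_p$; the exceptional primes there are absorbed into the zero-density set allowed by the statement. For primitivity, I would try to rule out any nontrivial block system by exhibiting elements of $\Gamma_p$ whose orders exceed any candidate block size. Each product $R_i R_j$ is an automorphism of the affine cubic (\ref{markoffeq}) and, under the classical identification of the Markoff surface with a character variety of the once-punctured torus, corresponds to an element of $\mathrm{SL}_2(\Z)$; its order modulo $p$ is governed by eigenvalues in $\F_{p^2}$ and generically grows like a divisor of $p^2-1$, which is too large to be consistent with any nontrivial block of $|G_p| = p^2 + \left(\frac{-1}{p}\right)3p$ vertices.

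The heart of the proof, and where I expect the main difficulty, is step (iii): exhibiting an explicit word $w(R_1,R_2,R_3) \in \Gamma$ whose action on $G_p$ contains a short cycle of prime length, or even a single $3$-cycle arising from a well-chosen commutator. The fixed-point locus of such a $w$ is cut out by explicit polynomial equations on the affine cubic, so its size can be estimated via Lang-Weil. The serious obstacle is that the \emph{cycle lengths} of $w$ on the moving set depend on the factorization of $p^2-1$: for a density-zero set of primes every candidate word in a fixed finite collection could fail the short-cycle criterion. Handling this is precisely what forces the ``zero-density subset'' caveat and likely requires a Chebotarev-type averaging over a family of candidate words, together with a sieve-style argument to control the exceptional primes.

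Once step (iii) is in place, Jordan-Marggraff gives $\Gamma_p \supseteq \mathrm{Alt}(G_p)$, and the Sym/Alt dichotomy is settled by a parity computation on the generators: each involution $R_i$ has sign determined by $|G_p| - \#\mathrm{Fix}(R_i)$ modulo $4$, and $\mathrm{Fix}(R_i)$ is the intersection of $G_p$ with the quadratic locus $\{2 x_i = 3 x_{i+1} x_{i+2}\}$ in $\F_p^3$, whose cardinality is straightforward to compute. If all three $R_i$ turn out to be even permutations then $\Gamma_p = \mathrm{Alt}(G_p)$; otherwise $\Gamma_p = \mathrm{Sym}(G_p)$.
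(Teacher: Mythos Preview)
The paper does not prove this theorem: it is quoted as a result of Meiri and Puder \cite{MP} and cited without argument, so there is nothing in the present paper to compare your proposal against.

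As a standalone sketch, your overall architecture---transitivity from Bourgain--Gamburd--Sarnak, then primitivity, then Jordan's theorem applied to an element with a prime-length cycle---is indeed the shape of the argument in \cite{MP}. But two of your three steps have real gaps. Your primitivity argument does not work as written: producing elements of $\Gamma_p$ whose order exceeds any candidate block size does not rule out a block system, since imprimitive groups (any nontrivial wreath product, say) routinely contain elements of order far larger than the block size. What is needed is control over the \emph{cycle structure} of specific elements relative to a hypothetical block decomposition, not merely their order.

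For step~(iii), the relevant element is not a commutator with small support estimated via Lang--Weil, but a suitable power of a single rotation $\rot_j$. Its cycle structure on $G_p$ is completely explicit from Section~\ref{rotsection}: $\rot_j$ decomposes as a product of cycles on the conics $C_j(a)$, each of length dividing $p-1$ or $p+1$. Choosing the largest prime $q\mid p^2-1$ and raising $\rot_j$ to an appropriate power isolates an element supported on a bounded number of $q$-cycles; both primitivity and the Jordan-type input are extracted from analysing this one element, so your steps~(ii) and~(iii) should really be merged. The density-zero exceptional set arises here as an arithmetic condition on the factorisation of $p^2-1$ (one needs $q$ large enough relative to $p$), not via a Chebotarev average over a family of words as you suggest.
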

They conjecture that this is in fact true for all primes $p\geq 5$. So we can compare walking along the graph $G_p$ to generating elements of $S_n$ and $A_n$ (where $n=|G_p|$) with a given random generating set, which has been studied, for example in \cite{BH}.

The graphs $G_p$ and $\hat{G}_p$ are now known to be connected for the majority of primes $p$. This was proven by Bourgain, Gamburd, and Sarnak in \cite{BGS1} as a first step in studying the arithmetic of Markoff triples (for example, the distribution of primes or numbers with a bounded number of prime factors among Markoff numbers). The structure of these graphs plays an important roll in sieving over Markoff triples, which is key in \cite{BGS2}. Specifically, they show the following.
\begin{thm}[Bourgain, Gamburd, Sarnak \cite{BGS1}]\label{BGSmain}
For all primes $p\not\in E$, where $E$ is an exceptional set of primes, the graph $\hat{G}_p$ is connected. The set $E$ is small: for any $\epsilon>0$, the number of primes $p\leq T$ with $p\in E$ is at most $T^{\epsilon}$ for $T$ large.
\end{thm}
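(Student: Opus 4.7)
The plan is to reduce connectivity of $\hat G_p$ to a three-stage funneling scheme driven by a multiplicative invariant. For any triple $(a,b,c)$ and any choice of coordinate $i$, the rotation $\rot_i$ fixes two coordinates and acts on the remaining one through a linear fractional transformation whose associated commutative algebraic group over $\F_p$ is either a split torus isomorphic to $\F_p^\times$, the norm-one subgroup of $\F_{p^2}^\times$, or (in degenerate cases) the additive group $(\F_p,+)$. Call the \emph{order} of the vertex in direction $i$ the order of the corresponding group element; outside the degenerate cases it always divides $p\pm 1$. The \emph{cage} $\Ca\subset\hat G_p$ is then the set of triples whose rotation in each of the three directions has maximal possible order, and a short counting argument shows that $|\Ca|$ is a positive proportion of $|\hat G_p|=p^2+O(p)$.

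The second stage is to show $\Ca$ is connected and then to funnel every other vertex into it. Connectivity of $\Ca$ comes from the End Game: to join two cage vertices one alternates long rotation runs in two different coordinates, and square-root cancellation in the resulting Kloosterman-type character sums (Weil bounds) forces the two long orbits to intersect within $O(\log p)$ alternations, producing an explicit path. The Middle Game handles any vertex whose order is already at least a small power of $p$: one rotates in one coordinate and, by equidistribution of the orbit over multiplicative cosets in the relevant torus (again via Weil), hits a new coordinate of near-maximal order in $O(\log p)$ steps, then iterates in the remaining coordinates until the cage is reached.

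The hard part, and the only source of the exceptional set $E$, is the Opening: moving vertices of very small order into Middle Game range. Here equidistribution is too weak, and I would appeal to the Bourgain--Gamburd flattening and product machinery for approximate subgroups, applied to the subgroup of $\mathrm{PGL}_2(\F_p)$ generated by the rotation action on the conic fibers. Provided the rotation elements do not all lie inside a single proper multiplicative subgroup of $\F_p^\times$ or of the norm-one torus of $\F_{p^2}^\times$, $O(\log p)$ steps of the rotation walk already expand into the Middle Game size range. The primes where this fails are precisely those for which $p^2-1$ has an exceptionally smooth factorization supporting many small subgroups simultaneously, and a standard sieve on smooth values of shifted primes then bounds $|\{p\le T:p\in E\}|\ll T^\epsilon$ for every $\epsilon>0$. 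The main obstacle is thus the Opening: the End and Middle Games reduce to essentially deterministic character-sum estimates, whereas the Opening requires genuinely quantitative non-concentration of random walks on the relevant tori and is where the density-zero set of bad primes unavoidably appears.
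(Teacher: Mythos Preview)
Your three-stage skeleton (End Game, Middle Game, Opening, funneled through a rotation-order invariant) is the right outline, but the assignment of techniques to the stages is substantially off compared with what Bourgain--Gamburd--Sarnak actually do.

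First, two structural points. The cage is the set of triples for which \emph{some} coordinate has maximal rotation order, not all three; with your definition the counting argument for a positive-proportion cage would not go through so easily. Also, in \cite{BGS1} the End Game is not the proof that $\Ca$ is connected: cage connectivity is a separate, essentially elementary step (the incidence graph on the planes $C_j(\alpha)$ has diameter~$2$, Proposition~6 in \cite{BGS1}). The End Game (Proposition~7) instead shows that any triple of order at least $p^{1/2+\delta}$ already meets the cage along its maximal orbit; this is where the Hasse--Weil bound on an absolutely irreducible auxiliary curve plus M\"obius inclusion--exclusion is used.

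Second, and more importantly, the analytic inputs you invoke for the Middle Game and the Opening are not the ones BGS use, and this matters for where the exceptional set comes from. The Middle Game is \emph{not} another Weil-type equidistribution step: for orders in the range $[p^\epsilon,p^{1/2+\delta}]$ the Weil error term swamps the main term. The tool is instead Bourgain's sum-product/Szemer\'edi--Trotter type bound (Proposition~10 of \cite{BGS1}), which controls the number of solutions to $h_1+\sigma h_1^{-1}=h_2+h_2^{-1}$ with $h_i$ in small multiplicative subgroups and forces the order to strictly increase at each step through the divisor lattice of $p^2-1$. The Opening, in turn, is \emph{not} Bourgain--Gamburd flattening on approximate subgroups of $\mathrm{PGL}_2(\F_p)$: it is a characteristic-$0$ argument (there are no finite $\Gamma$-orbits on the Markoff surface over $\overline{\Q}$), which yields a uniform lower bound of order $(\log p)^{1/3}$ on the rotation order of any triple in $\hat G_p$. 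The exceptional set $E$ then arises not from the Opening per se but from bridging this $(\log p)^{1/3}$ bound to the $p^\epsilon$ threshold where the Middle Game applies: the required inequality is a divisor-sum condition $\sum_{d\mid p^2-1,\,d\in[(\log p)^{1/3},y]}d^{2/3}<y$, and the primes violating it are indeed those with very smooth $p^2-1$, giving the $T^\epsilon$ bound. So your diagnosis of $E$ via smoothness of $p^2-1$ is correct, but its mechanism is the Middle-Game divisor iteration, not a failure of random-walk non-concentration in the Opening.
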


Furthermore, they conjecture not only that $G_p$ is connected for all primes $p$, but that in fact the family of graphs $G_p$ where $p$ is prime is an expander family.  This is explored in \cite{dCM}, as we discuss in the following section. Theorem~\ref{BGSmain} is enough to show, as Bourgain, Gamburd, and Sarnak show in \cite{BGS2}, that the set of Markoff numbers contains infinitely many composite numbers, and in fact that almost all Markoff numbers are composite.


\section{Markoff triple hash function and data}\label{cryptsection}
Recall that for a sufficiently large prime $p$, we can construct a hash function as follows.
A fixed public initial vertex is specified, say $(1,1,1)$. Also choose an involution $k$; the choice of $k$ is fixed but arbitrary.
The edges of $G_p$ are canonically labeled with 1, 2, or 3, corresponding to the three involutions $R_1,R_2,R_3$ respectively. Given a bit string of finite length as input, say $b_0b_1b_2\ldots$, designate $c_0=k$.  Then, for $i>0$, suppose $c_{i-1}$ was the label of the previous edge, $c_i\in\{1,2,3\}$. Then we move along the edge
$$c_i=(c_{i-1}+b_i)\Mod{3}+1$$
Note that doing this avoids substrings of the form $R_iR_i$, and so we avoid backtracking. The output of the hash function is the final vertex where the walk ends, after processing all the bits $b_i$ in the string. Note that the initial bit string is not necessarily raw text or data, and will most likely be augmented with some compression function, such as the Merkle-Damgard construction.

For example, suppose we want to encode the binary message \verb|10011| in $G_{13}$. We choose $k=0$ then apply the series of rotations
$$10011\mapsto R_2\circ R_3\circ R_2\circ R_1\circ R_2(1,1,1)=(0,5,1)$$
We know that $|G_p|=O(p^2)$, so for the output space of this hash function to be comparable to say SHA-256, we would want to take $p\approx2^{128}$.
The security of this hash function depends upon the difficulty of path or cycle finding in $G_p$. That is, given $x$, $y\in G_p$, what is the time complexity of finding a path between $x$ and $y$?

Note that if the starting vertex is $(1,1,1)$, the input string needs to be longer than $\log{p}$ so that the coordinates of the output start to wrap around modulo $p$.  Otherwise a trivial lifting attack is possible.  A better starting vertex $v_0$ can be obtained by taking a walk of length $\log(p)$ from $(1,1,1)$.  In general we will assume that the length of the walk from $v_0$ is at least length $\log(p)$. In fact, \cite{BGS2} has conjectured that the family $G_p$ is an expander family; walks of length $O(\log{p})$ are sufficient for mixing in expander graphs.


\subsection{Cryptographic Heuristics}\label{cryptheur}
The theorems of Bourgain, Gamburd, and Sarnak, which we present in detail in Section~\ref{BGSattack} prove the correctness of the following path finding algorithm in $\hat{G}_p$ (under certain easy assumptions on $p$). This path finding algorithm uses a notion of ``order" of a triple, coming from a certain rotation assigned to it (see Section \ref{rotsection} for the definition). The idea is then that there is a large connected component of $\hat{G}_p$ consisting of triples of ``maximal" order, and to connect any two triples one need only connect each of them to this large component, which Bourgain-Gamburd-Sarnak call the \emph{cage}. One does this by walking along a specially concocted path in which the orders of the triples grow as one walks along it, until one reaches the cage. In other words, the algorithm runs as follows.

Suppose we want to connect two triples $X$ and $Y$. We can do this in two steps:
\begin{enumerate}[leftmargin=0cm]
\item First, if $X$ or $Y$ are not in the cage, then we want to connect them to the cage.

Every triple $X$ is part of special cycles in $\hat{G}_p$ which we describe in Section~\ref{BGSattack}, called maximal orbits $M_X$ of $X$. Bourgain-Gamburd-Sarnak show that the orbit $M_X$ contains at least one point of higher order than $X$, call it $X'$. Then $X'$ is connected to $X$, so replace $X$ with $X'$ and repeat the same argument. The order is guaranteed to increase each step, until eventually the order is maximal.
\item Now we can suppose $X$ and $Y$ are both in the cage. Then by Proposition \ref{incprop} (Proposition~6 in \cite{BGS1}), there exists a point $Z$ in the cage such that $X-Z-Y$ is a valid path.

In fact, we have an explicit way of finding $Z$. Since $X$ or $Y$ might have more than one maximal orbit, we search over all maximal orbits of $X$ and $Y$ and look for an intersection, which is guaranteed to exist. In the case that $X$ and $Y$ have the same singular maximal index, then we simply perform an appropriate transposition on either $X$ or $Y$.
\end{enumerate}

As noted in Proposition~\ref{pathlengthprop}, we have an upper bound of
$$O(p\log\log{p})$$
on the length of the path obtained using the BGS algorithm. 

To see this, suppose we start at a point that is far from the cage and walk towards it by going around essentially a full orbit of a rotation acting on that point, then switching to an orbit that is slightly larger, and so on. The number of steps needed is as follows: we may have to go through all the divisors $d$ of $p-1$ and $p+1$ as we increase the divisors (these correspond to the orbit sizes). We only need to do this up to about $\sqrt p$, according to Bourgain-Gamburd-Sarnak, since after that there will be a point in the orbit that is in the cage. So the total number of steps needed to take in the BGS algorithm is bounded above by $$\sum_{d|p-1} d + \sum_{d|p+1} d +3p$$ where $3p$ bounds the number of steps needed to walk from one point in the cage to another. We have that $$\sum_{d|p-1} d,\;  \sum_{d|p+1} d <<p\log\log p$$ and this is expected to be almost always sharp.

Note that if the two points between which wee must find a path are both in the cage, then path-finding in the cage has complexity $O(p)$.

We can imagine optimizing this algorithm by being greedier with the first step. Instead of looking at the entire orbit, as soon as we find \emph{any} $X'$ with order higher than $X$, we replace $X$ with that $X'$. The algorithm is also guaranteed to work because the order is still guaranteed to increase at each step. If we assume this $X'$ occurs uniformly randomly within the orbit, instead of looking at $d$ points in an orbit, we only look at $d/2$ points on average. The complexity of this modified algorithm is largely unchanged (up to a constant).

This upper bound is supported by the data in Figure \ref{logpvslogavgtime}. Larger $p$ are needed for more precise comparison, but this is encouraging data that tracks very closely with $p\log\log p$. Note that the relative scale for time taken is arbitrary; nevertheless we mention the specifications for reference. Calculations were done in SageMath 9.1, running on a quad-core i7-8550U CPU.
\begin{figure}[h!]
\centering
\includegraphics[scale=.5]{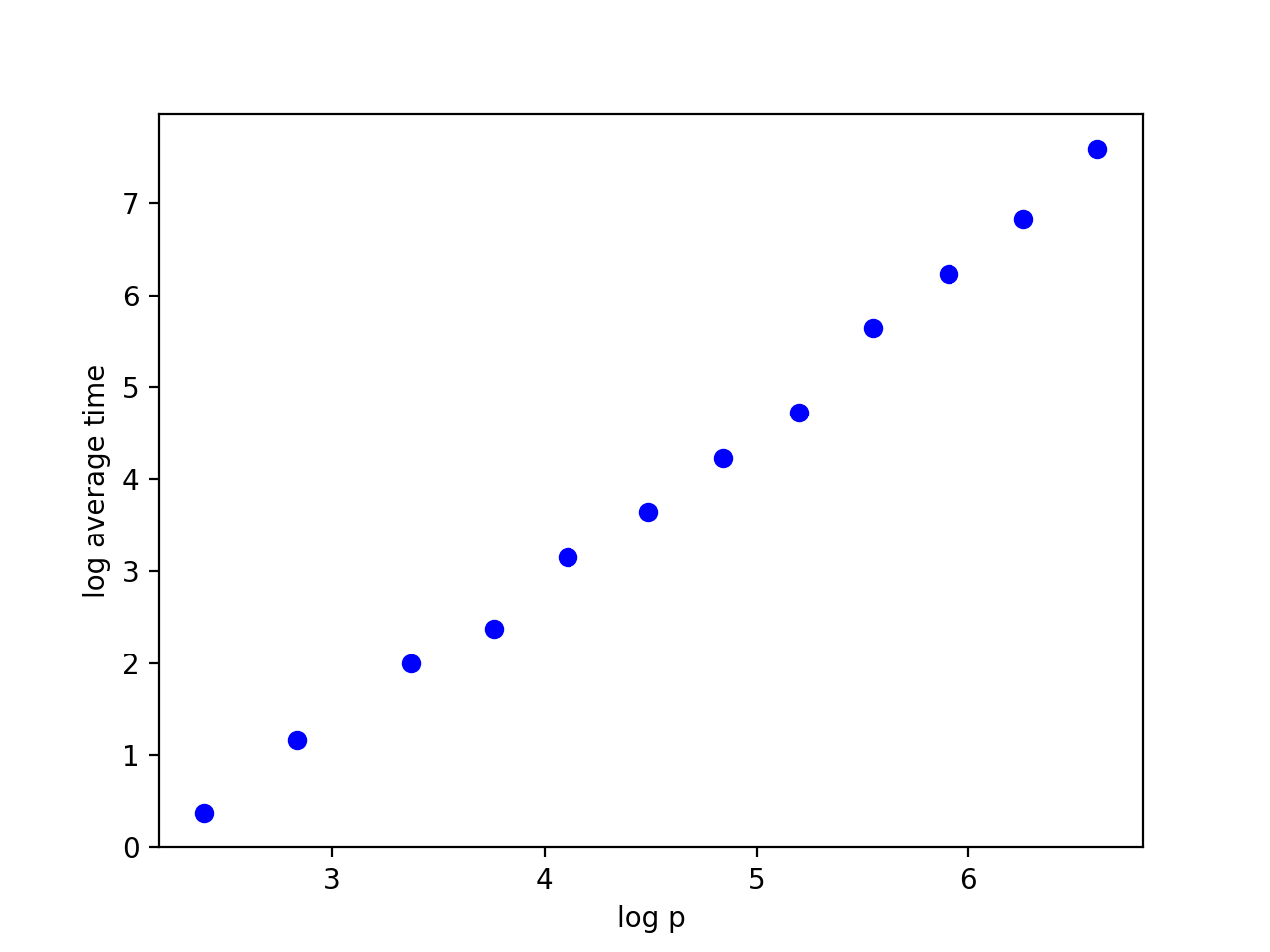}
\caption{Plot of $\log{p}$ vs. $\log$ average time taken by the BGS algorithm. Here we take primes $p\le739$. Time taken is in milliseconds, averaged over 100 trials for each $p$.}
\label{logpvslogavgtime}
\end{figure}

Now the complexity depends merely on the chance of a worst-case scenario, where the triple is not in the cage. This depends on a couple of factors: the proportional size of the cage, and the number of steps it could potentially take to connect any triple to the cage.

It turns out that both of these factors depend in turn on $\eta_p$, the  number of divisors of $p^2-1$. There is a correlation between $\eta_p$ and the number of steps needed to connect a triple (not in the cage), as can be seen in Figure \ref{logvsavgtime}. Additionally, Figure \ref{logvsprop} is supporting evidence that the size of cage also depends on $\eta_p$. The asymptotic behavior of this graph, as $\eta_p\to\infty$, is a relevant open question.
\begin{figure}[h!]
\centering
\includegraphics[scale=.5]{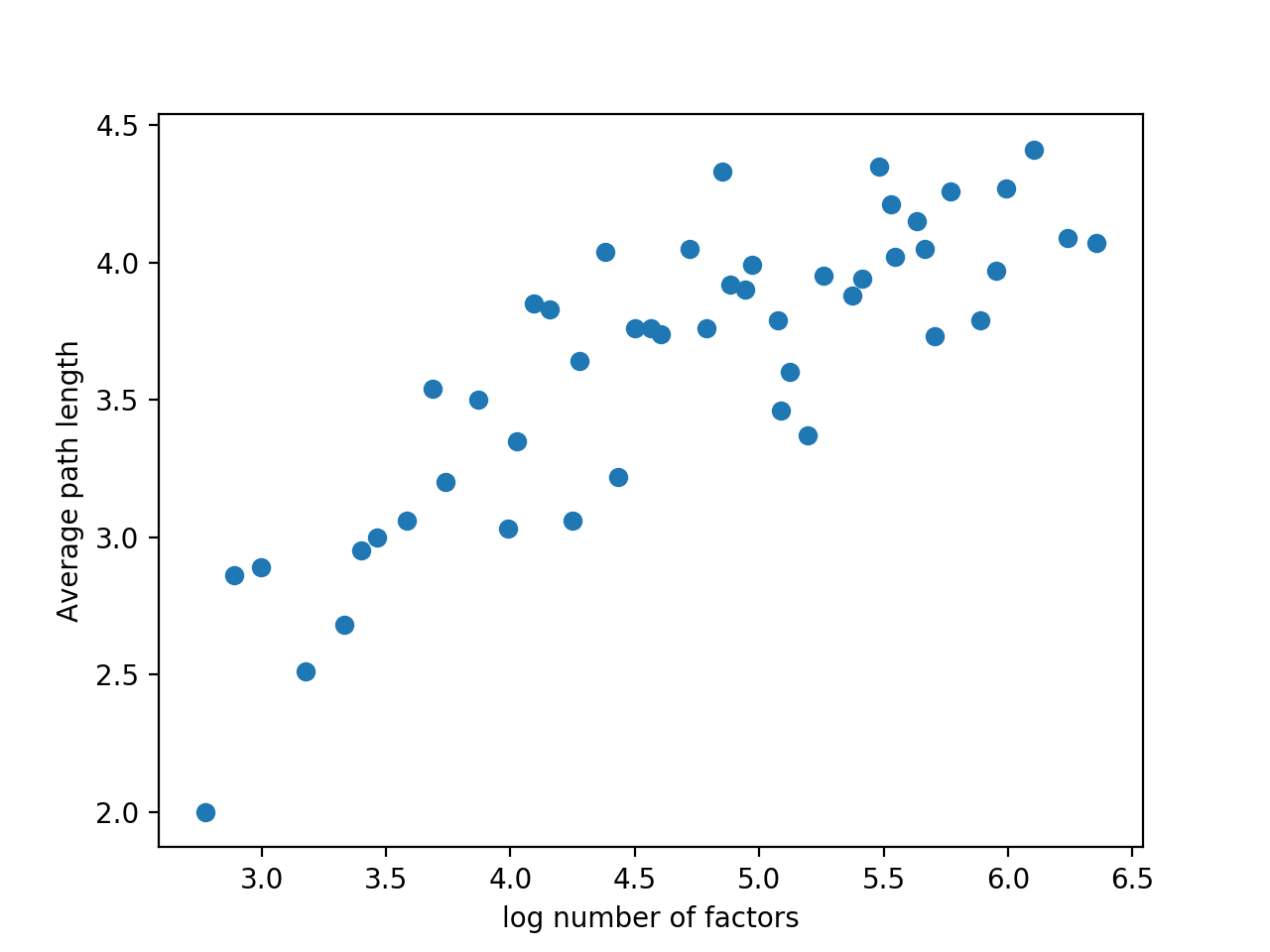}
\caption{Plot of $\log\eta_p$ vs. average time taken by the BGS algorithm, in seconds. Here we take prime $p<10000$ and $\eta_p<6000$. Time taken is in seconds, averaged over 10 trials for each $p$.}
\label{logvsavgtime}
\end{figure}

\begin{figure}[h!]
\centering
\includegraphics[scale=.5]{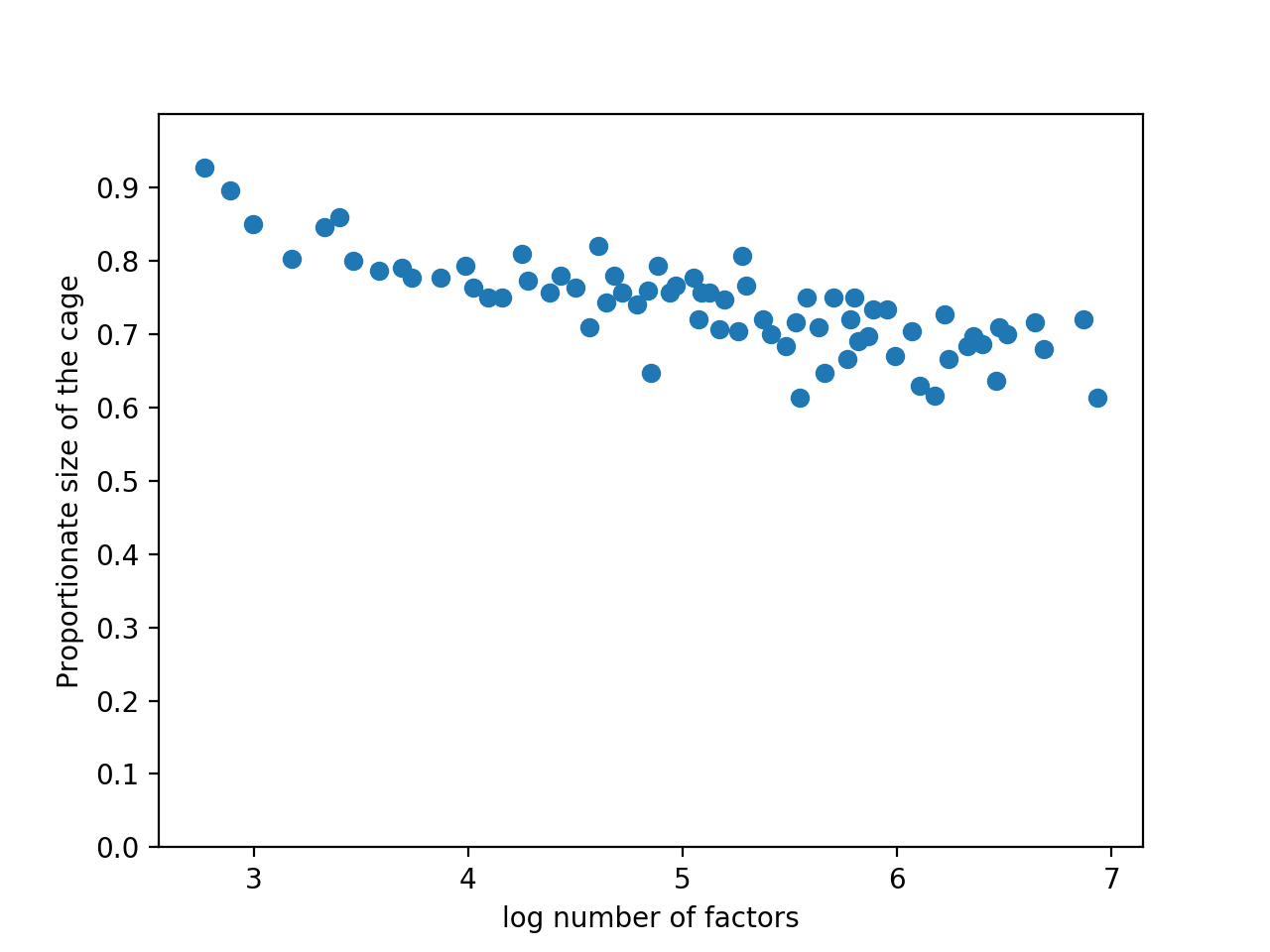}
\caption{Plot of $\log\eta_p$ vs. proportion of all vertices in the graph $\hat{G}_p$ which are in the cage for primes $p<100000$ with $\eta_p<24000$.}
\label{logvsprop}
\end{figure}

We would also like to see if the time taken to connect a point to the cage depends on $\eta_p$. A plot of this relationship can be seen in Figure \ref{fig8}, showing a strong correlation between $\eta_p$ and the time taken to connect a point to the cage.

\begin{figure}[h!]
\centering
\includegraphics[scale=.5]{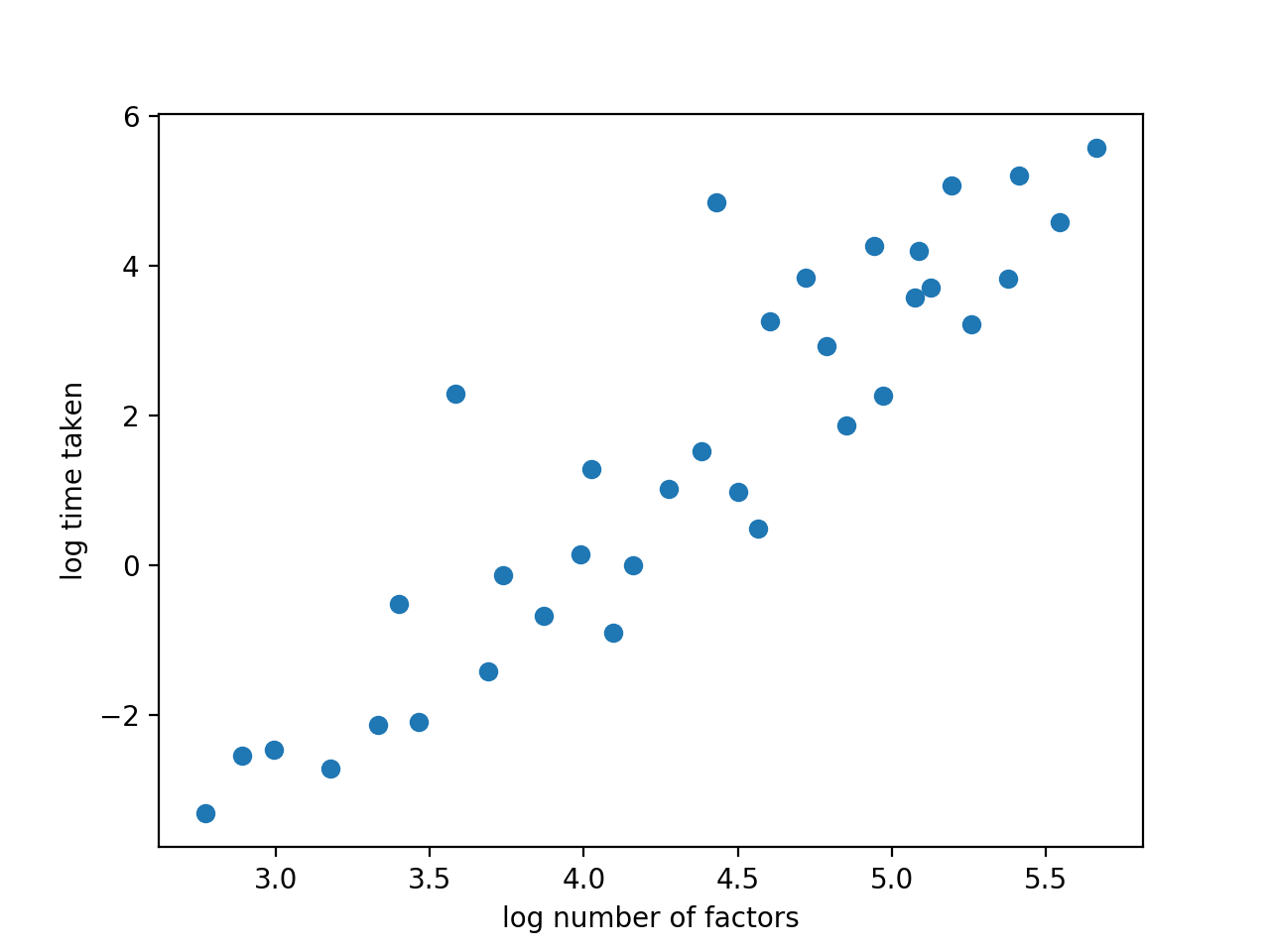}
\caption{Plot of $\log\eta_p$ vs. time taken by the BGS algorithm to connect a point to the cage, in seconds. Here our primes $p$ are taken such that $p<2000$ and $\eta_p<2000$. Time taken is in seconds, averaged over 10 trials for each $p$.}
\label{fig8}
\end{figure}

Concretely, \cite{BGS1} only establishes that $\hat{G}_p$ is connected as long as $p$ satisfies the following condition: for any $y$, \label{smoothdef}
$$\sum_{d\mid p^2-1,d\in[(\log p)^{1/3},y]}d^{2/3}<y$$
Therefore $p$ is selected so that $p^2-1$ is not smooth. Not only does this guarantee connectedness, it also assists with the problem of short cycles. The length of any orbit must divide $p^2-1$, so avoiding small factors will also avoid small orbits. Fortunately, such primes are difficult to find, and thus easy to avoid. Different search methods have been proposed for finding smooth primes (for example, in the appendices of \cite{C} and \cite{FKLPW}, the authors produce two separate approaches), all of which support the claim that finding such primes is a difficult task. 

This is additional evidence that increasing $\eta_p$ also increases the difficulty of path-finding. Thus we recommend that the security parameter be dependent on both the size of $p$ as well as $\eta_p$.

\subsection{Sampling}
Without prior knowledge of the entirety of $\hat{G}_p$, how does one randomly sample a point from $\hat{G}_p$? One way would be the following. Start at a fixed point, say $(1,1,1)$ which is in $\hat{G}_p$ for all $p$. Then perform a random non-backtracking walk starting from $(1,1,1)$, of a large length $l$, the end of which is our sample.
Below we see empirically that $l$ does not affect the random distribution for sufficiently large $l$:

\begin{figure}[h!]
\centering
\includegraphics[scale=.45]{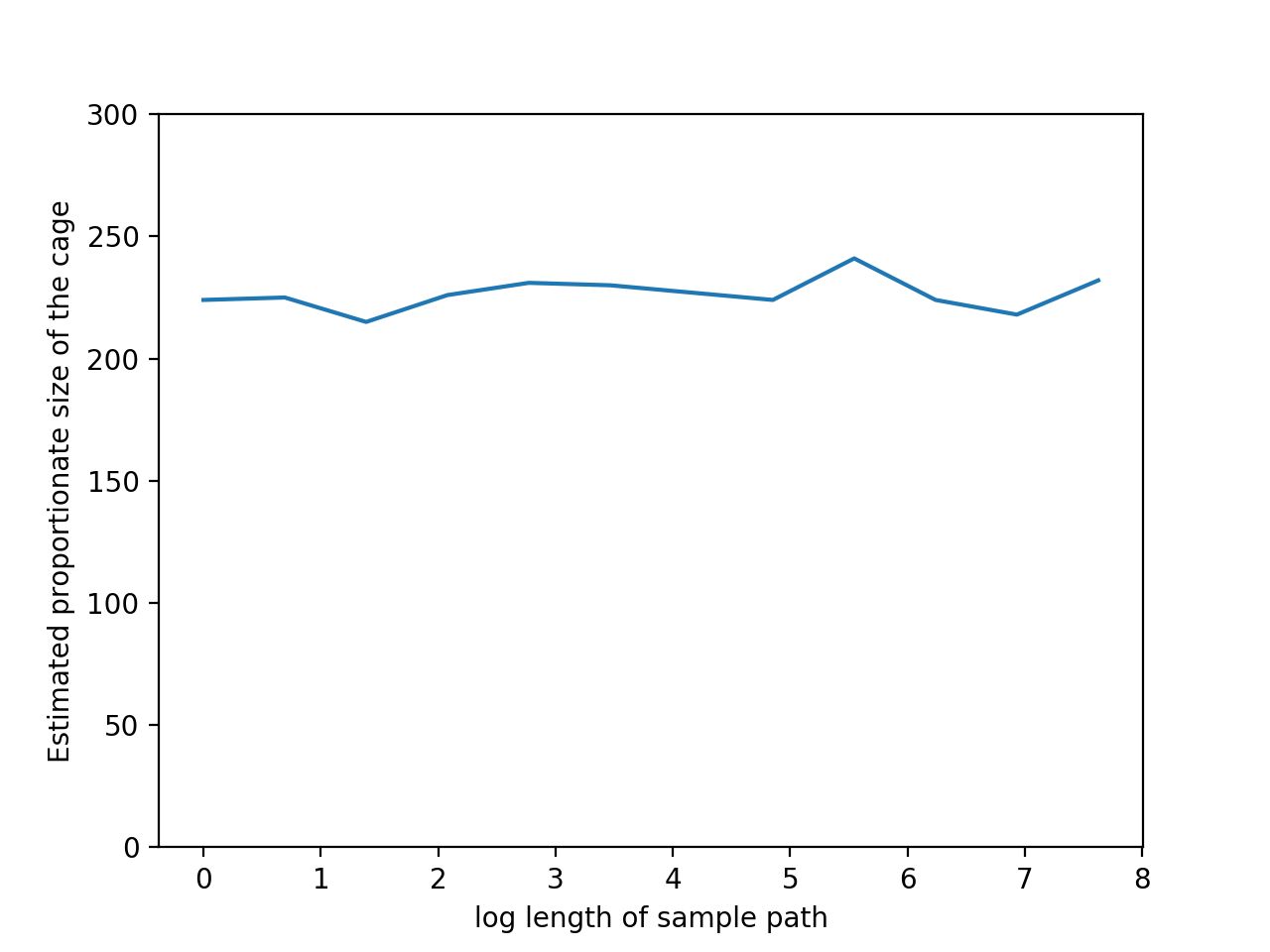}
\caption{Graph of $\log$ length of a sample path vs. a sampled estimate of the size of the cage. Here $p$ is fixed at 5851, and we performed 300 random walks of length $l$ for each $l$. The data show how many of these 300 samples are in the cage.}
\end{figure}

Of course, this method would be truly uniformly random if the family of graphs $\hat{G}_p$ were an expander family. We do have empirical evidence of this, as well as more compelling evidence from a paper of de Courcy-Ireland and Magee \cite{dCM}. In particular, they state that $G_p$ ``resembles" a random graph, which is a start to examining the spectral gap for the adjacency matrices of the graphs $G_p$.

Specifically, de Courcy-Ireland and Magee show that the distribution of the eigenvalues of the adjacency matrix of a Markoff graph $G_p$ asymptotically follows the Kesten-McKay law for the distribution of eigenvalues of large randomly chosen $3$-regular graphs. In general, the Kesten-McKay law says that, for a large random $d$-regular graph, the expected eigenvalue probability distribution is
$$\rho_d(\lambda)=\frac{d}{2\pi}\frac{\sqrt{4(d-1)-\lambda^2}}{d^2-\lambda^2}$$
for $|\lambda|\leq 2\sqrt{d-1}$ and is $0$ otherwise. 

Let $\mu_p$ be the distribution of eigenvalues on $G_p$, which range from $[-3,3]$:
$$\mu_p=\frac{1}{|G_p|}\sum\delta_{\lambda_j}$$
De Courcy-Ireland and Magee prove the following.
\begin{thm}[\cite{dCM}, Theorem 1.1]
Given $p$, there exists a constant $L\sim\log{p}$ and a constant $C$, independent of $p$ and $L$, such that
$$\int x^Ld\mu_p=\int x^L\rho_3(x)dx+O(C^L/p)$$
\end{thm}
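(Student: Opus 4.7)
The plan is to identify both sides of the claimed identity with counts of closed walks and compare them. Since $G_p$ is $3$-regular with adjacency matrix $A_p$,
$$\int x^L\,d\mu_p = \frac{1}{|G_p|}\operatorname{Tr}(A_p^L) = \frac{1}{|G_p|}\sum_{v\in G_p} N_L(v),$$
where $N_L(v)$ is the number of closed walks of length $L$ based at $v$. In parallel, the $L$-th moment of $\rho_3$ equals the number of closed walks of length $L$ starting at the root of the infinite $3$-regular tree, which is the universal cover of any $3$-regular graph. Every walk in $G_p$ lifts to a word $w = R_{i_L}\cdots R_{i_1}$ in the free product $\langle R_1\rangle \ast \langle R_2\rangle \ast \langle R_3\rangle \cong \Z/2 \ast \Z/2 \ast \Z/2$, and the lifted walk closes in the tree if and only if $w$ reduces to the identity after cancellation of adjacent $R_iR_i$ pairs.

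The next step is to split closed walks in $G_p$ according to whether their associated word is trivial in the free product. The trivial-word walks are exactly the backtracking tree walks and contribute precisely $\int x^L \rho_3(x)\,dx$ to $\operatorname{Tr}(A_p^L)/|G_p|$, matching the leading term. Therefore
$$\int x^L\,d\mu_p - \int x^L \rho_3(x)\,dx = \frac{1}{|G_p|}\sum_{\ell=1}^{L}\sum_{\substack{w \text{ reduced}\\ |w|=\ell,\ w\neq e}} F(w),$$
where $F(w) = \#\{v \in G_p : w\cdot v = v\}$. Since there are $3\cdot 2^{\ell-1}$ reduced words of length $\ell$ and $|G_p| \asymp p^2$, the theorem reduces to a uniform fixed-point estimate of the shape $F(w) \leq C_0^{|w|}\,p$ for some absolute $C_0$, with the factor $p$ absorbing possible one-dimensional components of the fixed locus on the Markoff surface.

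The hard part is exactly this uniform fixed-point bound. Each $R_i$ is a polynomial automorphism of the Markoff surface $X: x_1^2 + x_2^2 + x_3^2 = 3x_1x_2x_3$, but the coordinate polynomials of a word $w$ of length $\ell$ have degree growing like $2^\ell$, so a direct Bezout estimate on the intersection of the fixed locus with $X$ is exponentially too weak once summed against the word count $\sim 2^L$. The remedy is to exploit that $w$ preserves $X$ and that the Vieta involutions $R_i$ have very constrained dynamics, so that the algebraic entropy of the induced self-map of $X$ is linear rather than exponential in $\ell$. Combined with Weil-type bounds for curves on $X$ over $\F_p$ and careful handling of the level sets of $x_1^2+x_2^2+x_3^2-3x_1x_2x_3$, this should yield $F(w) \leq C_0^{|w|}\,p$. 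Plugging this in gives a total error of $O((2C_0)^L/p) = O(C^L/p)$ with $C = 2C_0$, which is the claimed bound as long as the implicit constant in $L \sim \log p$ is small enough that $C^L \ll p$. This algebraic-geometric control of fixed loci, uniformly in the word, is the technical heart of the argument and is where I would expect to follow de Courcy-Ireland and Magee most closely.
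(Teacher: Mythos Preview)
The paper under review does not prove this theorem. It is stated as Theorem~1.1 of \cite{dCM} and merely quoted, with no argument supplied; the surrounding text only uses it as evidence that the Markoff graphs resemble random $3$-regular graphs. So there is no ``paper's own proof'' to compare your proposal against.

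As for the proposal itself: the overall shape --- interpreting moments as closed-walk counts, separating the tree contribution from the ``extra'' closed walks coming from nontrivial reduced words, and reducing to a uniform fixed-point bound $F(w)\le C_0^{|w|}p$ --- is indeed the skeleton of the argument in \cite{dCM}. However, your justification for the fixed-point bound contains a real gap. You assert that the Vieta involutions have ``algebraic entropy \dots\ linear rather than exponential in $\ell$,'' but this is false: the automorphism group of the Markoff surface has positive topological entropy, and the coordinate degrees of a generic reduced word of length $\ell$ genuinely grow exponentially in $\ell$. A naive B\'ezout or Lang--Weil count on the fixed locus therefore does give an exponential factor, and the whole point is that this exponential is absorbed into $C^L$ while the $p$-saving comes from the fact that a nontrivial word fixes at most a curve (not the whole surface) on $X$ over $\overline{\mathbb F_p}$. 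The delicate step in \cite{dCM} is precisely showing that no nontrivial reduced word acts as the identity on the Markoff surface (so the fixed locus is a proper subvariety) and then controlling the $\mathbb F_p$-point count of that subvariety; your sketch waves past this with an incorrect entropy claim rather than addressing it.
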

However, as mentioned in \cite{dCM}, this distribution is not strong enough to show that the family of graphs $G_p$ is an expander family. We would like the spectral gap to be nonzero, i.e. the number of eigenvalues in the interval $[3-\epsilon,3]$ to be $O(1)$; the work in \cite{dCM} only proves that this number is $O(p^2/\log{p})$.

For a beautiful graphical comparison of the plot of this distribution to analogous plots of calculated eigenvalues for the Markoff surface mod $p$ for $p=83$ and $89$, see Figure 1.1 of \cite{dCM}. While not conclusive, this gives some indication that the family of Markoff graphs forms an expander family.





\section{Attack by Pathfinding via the Method of Bourgain-Gamburd-Sarnak}\label{BGSattack}

In this section, we go through the key elements of the proof of Theorem~\ref{BGSmain}, which is necessary for the analysis of how fast of a path-finding algorithm this produces in section~\ref{cryptheur}.

\subsection{Rotations}\label{rotsection}
A key collection of tools in the proof of Theorem~\ref{BGSmain} are certain rotations that are associated to every triple in $\hat{G}_p$. In this section, we go over crucial results about these rotations.

Denote by $\tau_{ij}$ the transposition of the $i$th and $j$th coordinates.

Let $C_j(a)$ denote all triples for which the $j$th coordinate is equal to $a$.

Given a triple $X$, define a \emph{rotation} function
$$\rot_{x_1}(X)=\tau_{23}\circ R_2(X)=(x_1,x_3,3x_1x_3-x_2)$$
Note that this function is easily extended to $x_2$ or $x_3$ by applying the appropriate permutation to $R_j(X)$ for the appropriate $j$.

Further note that without loss of generality we applied $R_2$ instead of $R_3$; we can simulate the latter by again applying the appropriate permutation to $X$. Thus define $\rot_{x_2},\rot_{x_3}$ similarly.

Since $\rot_{x_1}$ fixes $x_1$, we can think of $\rot_{x_1}$ as a function in $(x_2,x_3)$ on the plane defined by setting the first coordinate to be $x_1$:
$$\rot_{x_1}\begin{pmatrix}x_2\\x_3\end{pmatrix}=\begin{pmatrix}0&1\\-1&3x_1\end{pmatrix}\begin{pmatrix}x_2\\x_3\end{pmatrix}$$
So we define the \emph{rotation order} of $x$ as the order of 
$$\begin{pmatrix}0&1\\-1&\ x\end{pmatrix}\in \SL_2(\F_p)$$
Note here we are replacing $3x_1$ with $x$. For the remainder of this section, $x$ will always denote $3x_1$.

The rotation order of a triple is then defined to be the maximum rotation order of its coordinates.

Iteratively applying one rotation to a triple $X$ eventually returns one to $X$, and we call the set of all such points the \emph{orbit} of that rotation.

The eigenvalues of the rotation matrix are $\frac{x\pm\sqrt{x^2-4}}{2}$, and so we separate cases depending on whether $\leg{x^2-4}{p}=\pm1$.
\begin{itemize}
\item If $x\equiv\pm2\Mod{p}$, we say $x$ is \emph{parabolic}.
\item If $\leg{x^2-4}{p}=1$, then $x$ is \emph{hyperbolic}.
\item If $\leg{x^2-4}{p}=-1$, then $x$ is \emph{elliptic}.
\end{itemize}
A triple is parabolic/hyperbolic/elliptic if its coordinate with maximal rotation order is parabolic/hyperbolic/elliptic. These suggestive names will begin to make more sense if $\hat{G}_p$ is pictured literally as a subset of Euclidean space.

To reiterate an above statement, if we were to say $12\in\F_{17}$ is hyperbolic, we mean that the coordinate $x_1=4$ is hyperbolic.
\begin{lem}[Lemma 3 of \cite{BGS1}]\label{parabolicels}
Let $x$ be parabolic, i.e. $x\equiv\pm2\Mod{p}$. If $p\equiv3\Mod{4}$, then $C_1(x)$ is empty (i.e. $x$ does not appear in any triple in $\hat{G}_p$). If $p\equiv1\Mod{4}$, then
$$C_1(2/3)=\of{\frac{2}{3},t,t\pm\frac{2i}{3}}$$
$$C_1(-2/3)=\of{-\frac{2}{3},t,-t\pm\frac{2i}{3}}$$
where $i^2\equiv-1\Mod{p}$ and $t$ is any number $\Mod{p}$. So $C_1(x)$ is a pair of disjoint lines. Furthermore, the action of $\rot_x$ is explicitly given by
$$\rot_x\of{\of{\frac{2}{3},t,t\pm\frac{2i}{3}}}=\of{\frac{2}{3},t\pm\frac{2i}{3},t\pm\frac{4i}{3}}$$
$$\rot_x\of{\of{-\frac{2}{3},t,-t\pm\frac{2i}{3}}}=\of{-\frac{2}{3},-t\pm\frac{2i}{3},-t\mp\frac{4i}{3}}$$
So $\rot_2$ fixes each line while $\rot_{-2}$ interchanges them.
\end{lem}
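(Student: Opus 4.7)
My plan is a direct substitution: plug $x_1 = \pm 2/3$ into the Markoff equation, then into the rotation formula $\rot_{x_1}(x_1,x_2,x_3) = (x_1, x_3, 3x_1x_3 - x_2)$ from the preceding paragraph. Essentially the whole lemma reduces to the standard characterization of when $-1$ is a square in $\F_p$, together with a short bookkeeping calculation.

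First, setting $x_1 = 2/3$ (so $3x_1 = 2$) collapses the Markoff equation $x_1^2+x_2^2+x_3^2-3x_1x_2x_3 = 0$ to $(x_2-x_3)^2 = -4/9$, and setting $x_1 = -2/3$ collapses it to $(x_2+x_3)^2 = -4/9$. Thus $C_1(\pm 2/3)$ is nonempty if and only if $-1$ is a quadratic residue mod $p$, which by Euler's criterion happens exactly when $p \equiv 1 \Mod{4}$; this gives the first assertion. When $p \equiv 1 \Mod{4}$, fix $i \in \F_p$ with $i^2 \equiv -1 \Mod{p}$. Then $x_2-x_3 = \pm 2i/3$ in the first case, and $x_2+x_3 = \pm 2i/3$ in the second, and letting $t = x_2$ range over $\F_p$ produces the stated parametrizations of $C_1(2/3)$ and $C_1(-2/3)$ as pairs of lines. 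Disjointness within each pair follows from $2i/3 \not\equiv -2i/3 \Mod{p}$ for $p > 3$.

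Finally, the action of $\rot_{x_1}$ on these lines is a matter of substituting into $(x_1,x_3,3x_1x_3-x_2)$ and then reparametrizing. With $3x_1 = 2$, plugging in $(2/3, t, t \pm 2i/3)$ and setting $t' = t \pm 2i/3$ shows the image sits on the \emph{same} line as the input, so $\rot_{2/3}$ preserves each of the two lines setwise. With $3x_1 = -2$, plugging in $(-2/3, t, -t \pm 2i/3)$ and setting $t' = -t \pm 2i/3$ shows the image sits on the \emph{opposite} line, so $\rot_{-2/3}$ swaps the two lines. I do not anticipate a real obstacle: the only mild subtlety is keeping careful track of signs and choosing the right reparametrization in the $-2/3$ case so that the image is visibly of the claimed form on the other line — a matter of matching the new parameter rather than comparing raw coefficients in $t$.
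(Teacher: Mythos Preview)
Your proof is correct and follows essentially the same route as the paper: substitute $x_1=\pm 2/3$ to reduce the Markoff equation to $(x_2\mp x_3)^2\equiv -4/9$, invoke the quadratic-residue criterion for $-1$, parametrize the resulting lines, and then compute $\rot_{x_1}$ directly. The only cosmetic difference is that the paper writes out the rotation images fully while you phrase them via the reparametrization $t'=x_3$; both arrive at the same conclusion about which lines are fixed or swapped.
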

\begin{proof}
Without loss of generality suppose $x_1=\pm2/3$. Then equation (\ref{markoffeq}) reduces to
$$x_2^2+x_3^2+\frac{4}{9}\mp2x_2x_3\equiv0\Mod{p}$$
$$(x_2\mp x_3)^2\equiv-\frac{4}{9}\Mod{p}$$
So a solution exists if and only if $\leg{-1}{p}=1$, which is equivalent to $p\equiv1\Mod{4}$. 

Set $p\equiv1\Mod{4}$ and suppose $C_1(2/3)=(2/3,t,t+a)$. Then we have
$$t^2+(t+a)^2+\frac{4}{9}-2t(t+a)\equiv0\Mod{p}$$
which reduces to $a^2\equiv-4/9\Mod{p}$ independent of $t$, which gives the desired result.

Similarly suppose $C_1(-2/3)=(-2/3,t,-t+a)$, which gives
$$t^2+(-t+a)^2+\frac{4}{9}+2t(-t+a)\equiv0\Mod{p}$$
which again reduces to $a^2\equiv-4/9\Mod{p}$ which again gives the desired result.

Now we can explicitly calculate
$$\rot_2\of{\of{\frac{2}{3},t,t\pm\frac{2i}{3}}}=\of{\frac{2}{3},t\pm\frac{2i}{3},3\frac{2}{3}\of{t\pm\frac{2i}{3}}-t}=\of{\frac{2}{3},t\pm\frac{2i}{3},t\pm\frac{4i}{3}}$$
\begin{align*}\rot_{-2}\of{\of{-\frac{2}{3},t,-t\pm\frac{2i}{3}}}&=\of{-\frac{2}{3},-t\pm\frac{2i}{3},-3\frac{2}{3}\of{-t\pm\frac{2i}{3}}-t}\\&=\of{-\frac{2}{3},-t\pm\frac{2i}{3},-t\mp\frac{4i}{3}}\end{align*}
as desired.
\end{proof}
Here is an example in $G_{17}$, where $17\equiv1\Mod{4}$, $2/3\Mod{p}=12$, and $i\equiv4\Mod{17}$.
\begin{figure}
    \centering
    \includegraphics[scale=.17]{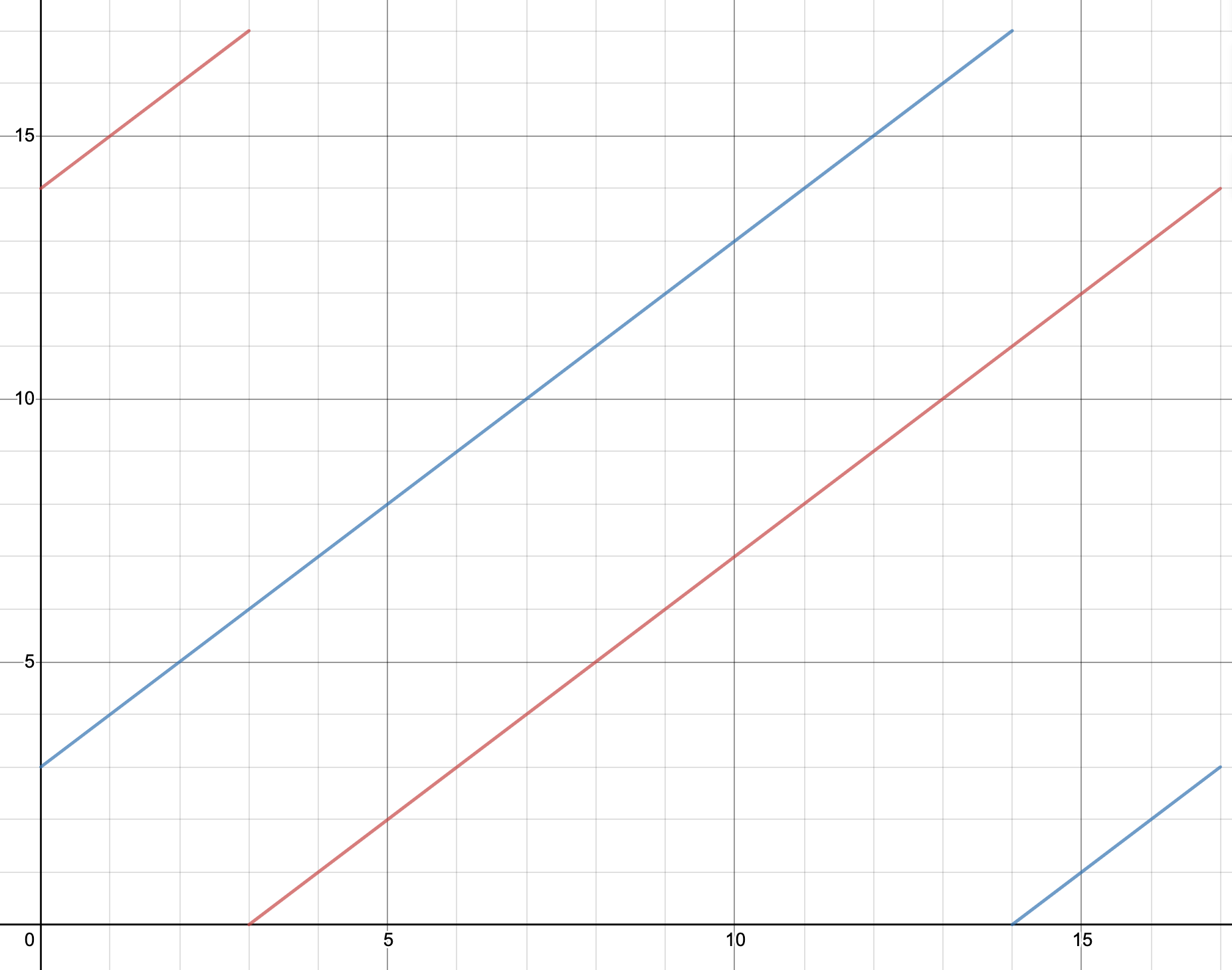} \qquad
    \includegraphics[scale=.17]{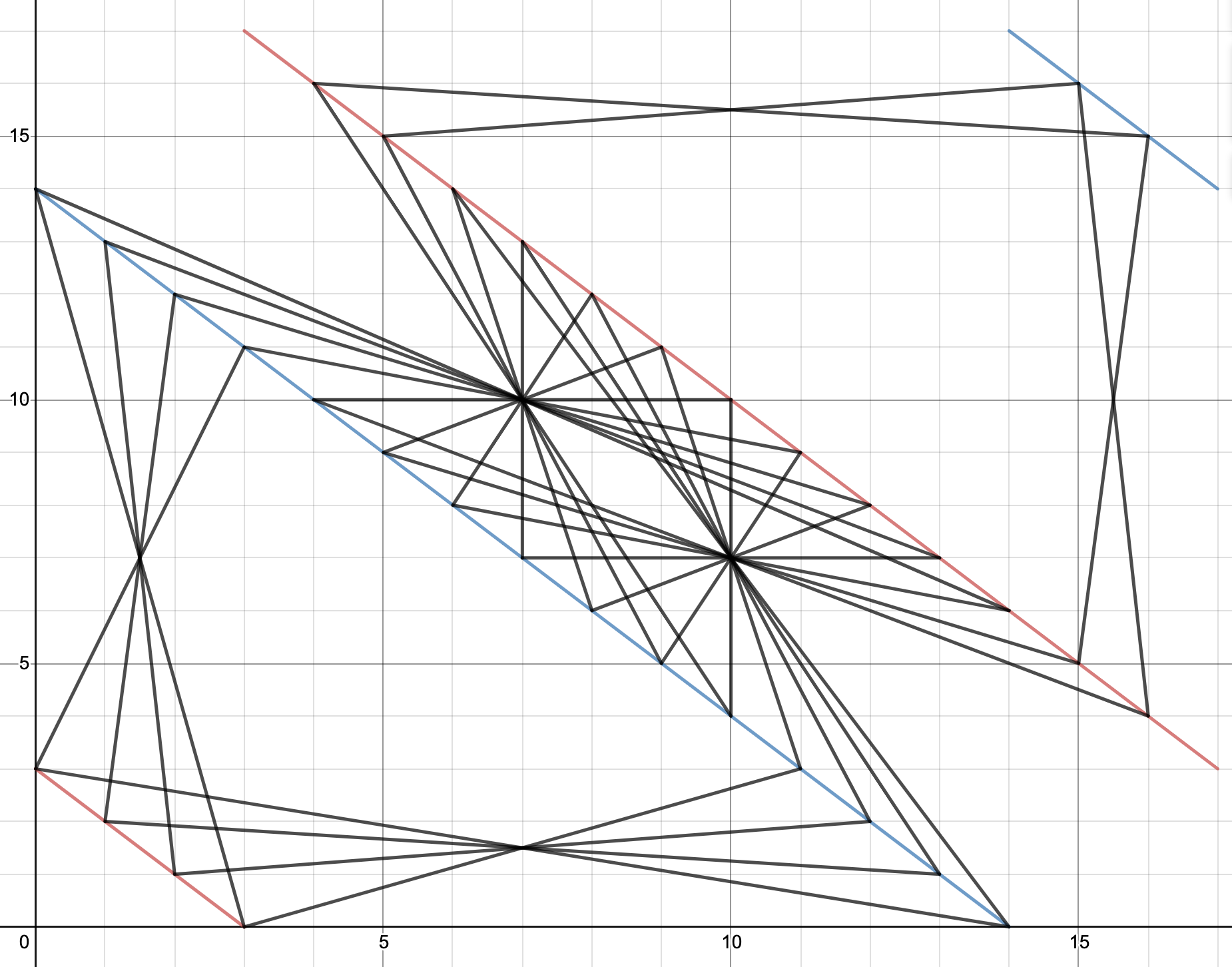}
    \caption{(a) Two lines in $C_1(12)$ fixed by $\rot_{12}$. (b) Two lines in $C_1(5)$ interchanged by $\rot_5$. }
    \label{fig:my_label}
\end{figure}


\begin{lem}\label{plusinverse}
If $x$ is not parabolic, then we can write
$$x=\chi+\chi^{-1}$$
where $\chi\in\F_p$ if $x$ is hyperbolic, and $\chi\in\F_{p^2}$ if $x$ is elliptic.
\end{lem}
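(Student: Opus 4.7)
The plan is to exhibit $\chi$ as an eigenvalue of the rotation matrix $M = \begin{pmatrix} 0 & 1 \\ -1 & x \end{pmatrix}$ and to exploit the fact that $\det(M) = 1$ in order to force the other eigenvalue to be $\chi^{-1}$. Explicitly, the characteristic polynomial of $M$ is $T^2 - xT + 1$. Any root $\chi$ is automatically nonzero, since the constant term of this polynomial is $1$, and if $\chi_+, \chi_-$ denote the two roots then Vieta's formulas give $\chi_+ \chi_- = 1$ and $\chi_+ + \chi_- = x$. So once a single root $\chi$ is produced, the desired identity $x = \chi + \chi^{-1}$ comes for free (and is exactly the substitution that diagonalizes $M$, which is why these rotations are easy to analyze).

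It then remains only to identify the field in which $\chi$ lives. Solving the quadratic yields $\chi = \tfrac{1}{2}\of{x \pm \sqrt{x^2 - 4}}$, and since $x$ is not parabolic we have $x^2 - 4 \not\equiv 0 \Mod{p}$, so $x^2 - 4$ is either a nonzero square or a non-square in $\F_p$. These are precisely the two cases that define hyperbolic and elliptic $x$ in the classification given just before the lemma. In the hyperbolic case, $\leg{x^2-4}{p} = 1$ gives $\sqrt{x^2 - 4} \in \F_p^{\times}$, hence $\chi \in \F_p$; in the elliptic case, $\leg{x^2-4}{p} = -1$ forces $\sqrt{x^2 - 4}$, and hence $\chi$, to lie in the quadratic extension $\F_{p^2}$.

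There is essentially no obstacle here: the lemma is a repackaging of the quadratic formula together with the hyperbolic/elliptic dichotomy already introduced. The only mild point to watch is that the calculation uses $1/2$, which is harmless for every odd prime $p$. Note also that no property of the Markoff equation itself is invoked — the statement is purely a fact about elements $x \in \F_p$ and the attached $2 \times 2$ matrix of determinant $1$ and trace $x$.
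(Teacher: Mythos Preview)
Your proof is correct and takes essentially the same approach as the paper: both identify $\chi$ as a root of $T^2 - xT + 1$, i.e., $\chi = \tfrac{1}{2}(x \pm \sqrt{x^2-4})$, and use that the product of the two roots is $1$ to conclude they are inverses of each other. The only cosmetic difference is that you package this via the characteristic polynomial and Vieta, whereas the paper writes down the two roots directly and verifies their product equals $1$ by hand.
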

\begin{proof}
Suppose $x$ is hyperbolic, i.e. $\leg{x^2-4}{p}=1$. Then suppose $x^2-4=r^2$. Set $\chi=(x+r)/2$, $\chi^{-1}=(x-r)/2$, and verify
$$\frac{x+r}{2}\frac{x-r}{2}=\frac{x^2-r^2}{4}=1$$
Then we have
$$r^2=(\chi-\chi^{-1})^2=\chi^2+\chi^{-2}-2=(\chi+\chi^{-1})^2-4$$
as desired. Now similarly suppose $x$ is elliptic. Then $x^2-4$ is not a residue in $\F_p$, but it is a residue in 
$$\F_{p^2}\simeq\frac{\F_p[y]}{y^2-(x^2-4)}$$
Set $x^2-4=r^2$ where $r\in\F_{p^2}$ and repeat the above argument.
\end{proof}

Upon diagonalizing the rotation matrix $\rot_x$, one arrives at
\begin{align*}
\rot_x&=\begin{pmatrix}1&1\\ \chi&\chi^{-1}\end{pmatrix}
\begin{pmatrix}\chi&0\\ 0&\chi^{-1}\end{pmatrix}
\begin{pmatrix}1&1\\ \chi&\chi^{-1}\end{pmatrix}^{-1}\\
&=(\chi^{-1}-\chi)^{-1}
\begin{pmatrix}1&1\\ \chi&\chi^{-1}\end{pmatrix}
\begin{pmatrix}\chi&0\\ 0&\chi^{-1}\end{pmatrix}
\begin{pmatrix}\chi^{-1}&-1\\ -\chi&1\end{pmatrix}
\end{align*}
Thus
\begin{align*}
(\rot_x)^l&=(\chi^{-1}-\chi)^{-1}
\begin{pmatrix}1&1\\ \chi&\chi^{-1}\end{pmatrix}
\begin{pmatrix}\chi^l&0\\ 0&\chi^{-l}\end{pmatrix}
\begin{pmatrix}\chi^{-1}&-1\\ -\chi&1\end{pmatrix}\\&=(\chi^{-1}-\chi)^{-1}
\begin{pmatrix}\chi^{l-1}-\chi^{1-l}&-\chi^l+\chi^{-l}\\ \chi^l-\chi^{-l}&-\chi^{l+1}+\chi^{-l-1}\end{pmatrix}
\end{align*}
If we consider $\chi^l=t$, where $t\in\ip{\chi}$, then we have
$$\ip{\rot_x}=\left\{(\chi^{-1}-\chi)^{-1}\begin{pmatrix}\chi^{-1}t-\chi t^{-1}&t^{-1}-t\\t-t^{-1}&\chi^{-1}t^{-1}-\chi t\end{pmatrix}:t\in\ip{\chi}\right\}$$
Thus
\begin{align}
C_1(x)&=\left\{(\chi^{-1}-\chi)^{-1}\begin{pmatrix}\chi^{-1}t-\chi t^{-1}&t^{-1}-t\\t-t^{-1}&\chi^{-1}t^{-1}-\chi t\end{pmatrix}\begin{pmatrix}x_2\\x_3\end{pmatrix}:t\in\ip{\chi}\right\}\nonumber\\
&=(\chi-\chi^{-1})^{-1}\Big(t(x_3-\chi^{-1}x_2)+t^{-1}(\chi x_2-x_3),\\
&\quad t(\chi x_3-x_2)+t^{-1}(x_2-\chi^{-1}x_3)\Big)\label{matrixfactor}
\end{align}
again for $t\in\ip{\chi}$. Now we can rewrite the second coordinate as $at+bt^{-1}$ where
$$a=\frac{x_3-\chi^{-1}x_2}{\chi-\chi^{-1}},\quad b=\frac{\chi x_2-x_3}{\chi-\chi^{-1}}$$
Later we will need the fact that
\begin{equation}\label{notone}ab=\frac{x_2x_3(\chi+\chi^{-1})-x_2^2-x_3^2}{(\chi-\chi^{-1})^2}=\frac{x^2}{(\chi-\chi^{-1})^2}=\of{\frac{\chi+\chi^{-1}}{\chi-\chi^{-1}}}^2\ne1\end{equation}
Now we consider the cases of $x$ hyperbolic or elliptic separately.
\begin{itemize}[leftmargin=0cm]
\item For $x$ hyperbolic: From equation (\ref{matrixfactor}), note that $a,b\in\F_p^*$, so substitute $t\mapsto ta^{-1}$ to see that
$$C_1(x)=\left\{\of{t+\frac{ab}{t},\chi t+\frac{ab}{\chi t}}:t\in\F_p^*\right\}$$
Applying the rotation gives
\begin{equation}\label{hyperbolicmap}\rot_x\of{t+\frac{ab}{t},\chi t+\frac{ab}{\chi t}}=\of{\chi t+\frac{ab}{\chi t},\chi^2 t+\frac{ab}{\chi^2 t}}\end{equation}
Since $t\in\F_p^*$, we see that $|C_1(x)|=p-1$. On the other hand, since $x$ is hyperbolic, by Lemma \ref{plusinverse}, we can write
$$x=\rho^j+\rho^{-j}$$
where $\rho$ is a primitive root of $\F_p$. Then if we iteratively apply $\rot_x$, we cycle through $\frac{p-1}{j}$ elements in $C_1(x)$, i.e. the rotation order of $\rot_x$ is $\frac{p-1}{j}$ for some $j$.

An explicit example of this can be seen below in Figure \ref{fig2} for the case of $G_{17}$:
\begin{figure}
    \centering
    \includegraphics[scale=.16]{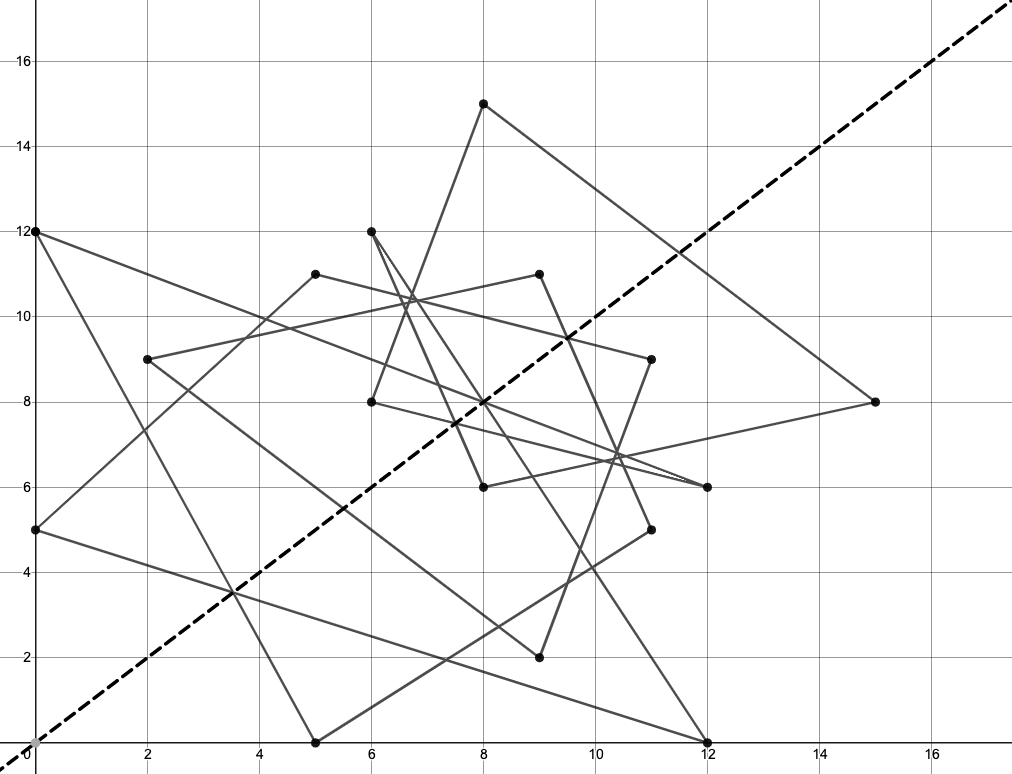} \qquad
    \includegraphics[scale=.16]{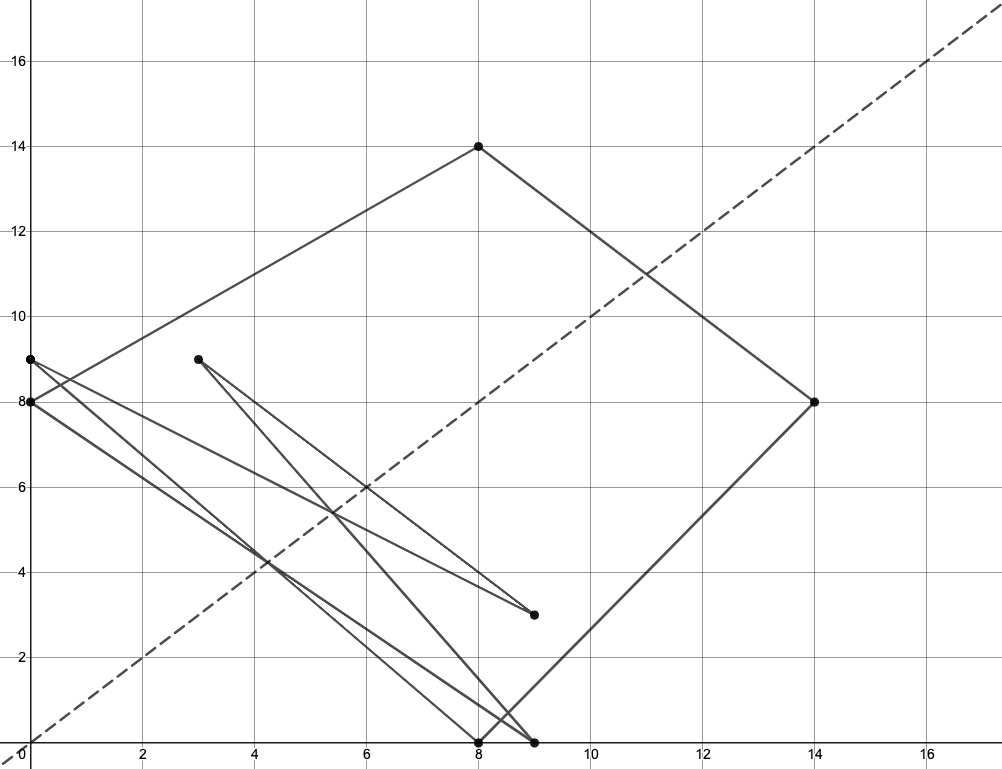}
    \caption{(a) A maximal order hyperbolic rotation $\rot_3$ shown in the plane $C_1(3)$. (b) A hyperbolic rotation $\rot_2$ of order $8=\frac{p-1}{2}$ in $C_1(2)$. The line $x_2=x_3$ is shown for symmetry.}
    \label{fig2}
\end{figure}

\item For $x$ elliptic, the derivation is similar. We start by rewriting
$$x=\chi+\chi^{-1}=\nu+\nu^p$$
for $\nu\in\F_{p^2}-\F_p$. Then applying the rotation gives
\begin{equation}\label{ellipticmap}\rot_x\of{x,t,\frac{\kappa_x}{t}}=\of{x,t+\frac{\kappa_x}{t},t\nu+\frac{\kappa_x}{t\nu}}\end{equation}
which implies
$$C_1(x)=\left\{\of{t+\frac{ab}{t},\nu t+\frac{ab}{\nu t}}:t\in\F_{p^2}^*,\quad t^{p+1}=ab\right\}$$
where the latter requirement implies $t\in\F_{p^2}\backslash\F_p$, which in turn implies $|C_1(x)|=p+1$. On the other hand, since $x$ is elliptic, by Lemma \ref{plusinverse}, we can write
$$x=\xi^j+\xi^{-j}$$
where $\xi$ is some element of $\F_{p^2}$. Explicitly, if $\gamma$ is a generator of $(\F_{p^2})^\times$, then $\xi=\gamma^{p+1}$. So if we iteratively apply $\rot_x$, we cycle through $\frac{p+1}{j}$ elements in $C_1(x)$, i.e. the rotation order of $x_1$ is $\frac{p+1}{j}$ for some $j$.

An explicit example of rotations for elliptic and hyperbolic elements in $G_{17}$ can be seen in Figure \ref{fig3}.
\begin{figure}[h!]
    \centering
    \includegraphics[scale=.16]{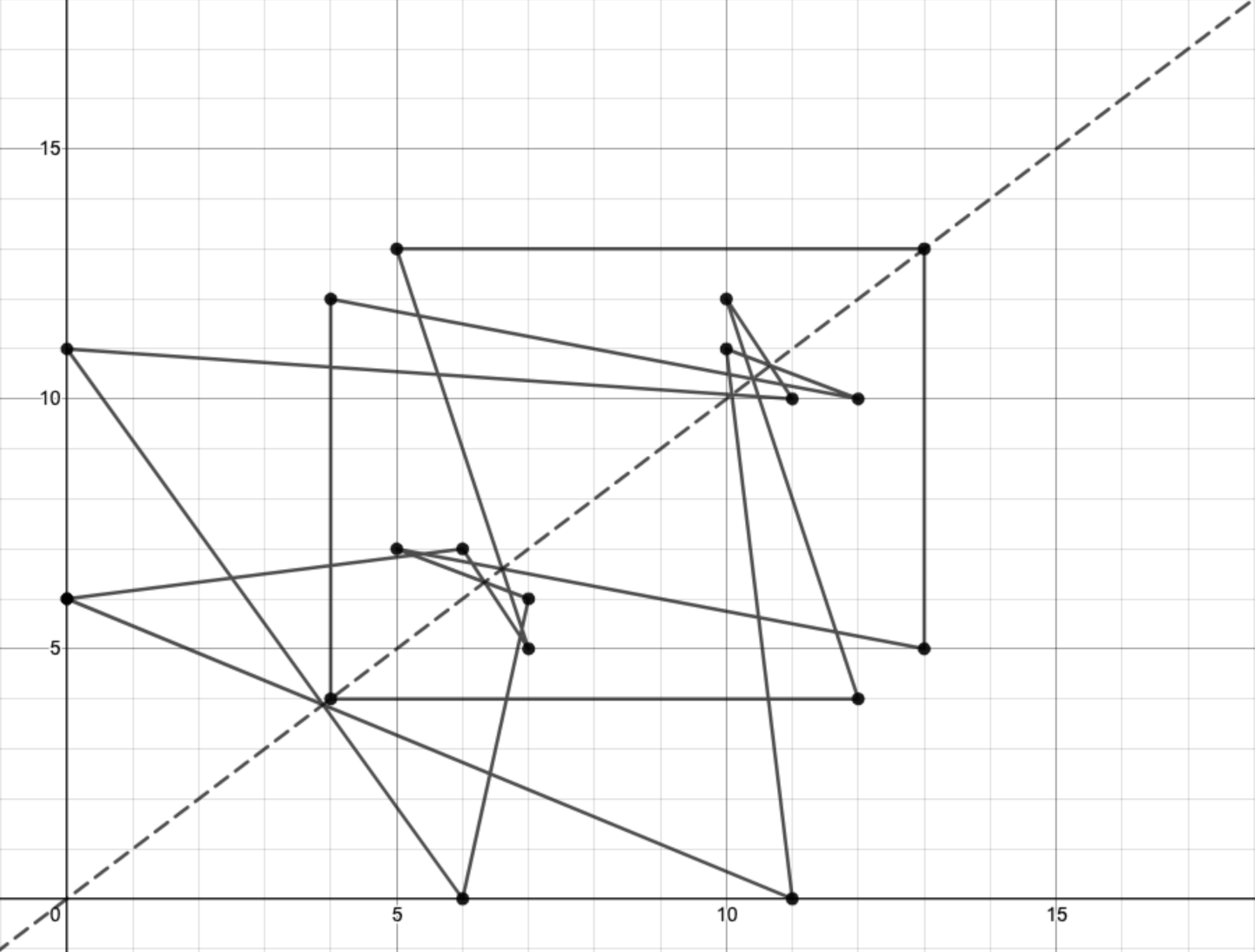} \qquad
    \includegraphics[scale=.16]{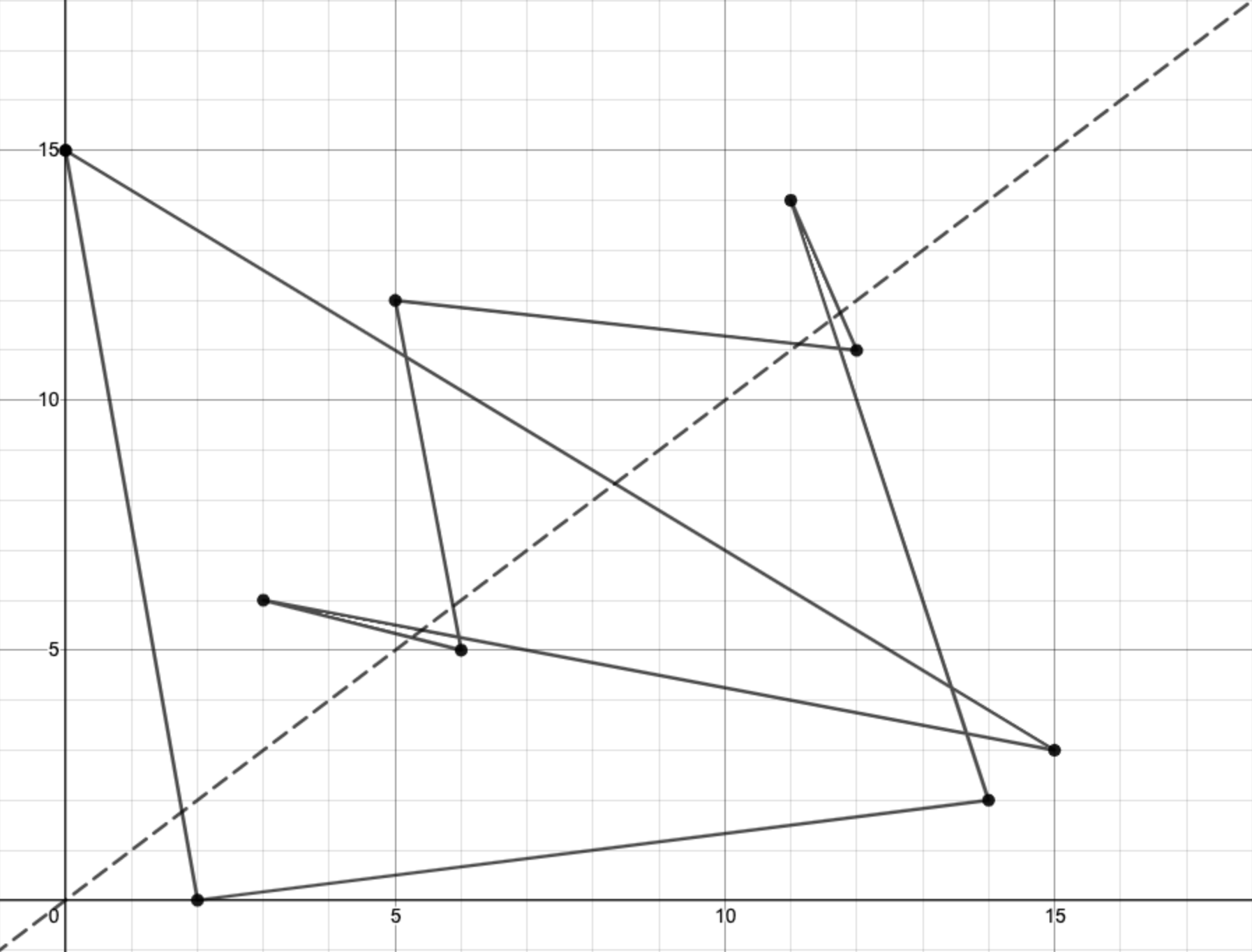}
    \caption{(a) A maximal order elliptic rotation $\rot_7$ shown in the plane $C_1(7)$. (b) A hyperbolic rotation $\rot_8$ of order $9=\frac{p+1}{2}$ in $C_1(9)$.}
    \label{fig3}
\end{figure}
\end{itemize}
From the above discussion, we have that $x$ has a maximal rotation order of $p-1$ if $x$ is hyperbolic, $p+1$ if it is elliptic, or $p,2p$ if it is parabolic. If any of these cases applies to $x$, we say $x$ is maximal hyperbolic/elliptic/parabolic respectively. A triple is maximal (hyperbolic/elliptic/parabolic) if one if its coordinates is with respect to its corresponding type (again remembering that $x=3x_1$).

Note that the rotation order of a parabolic $x$ is either $p$ or $2p$. We consider both of these elements to be maximal parabolic. If a triple $X$ contains either element, we can connect $X$ to a triple containing any coordinate.

\subsection{The End Game}
We are now ready to delve into the BGS algorithm. In this section we aim to show that any element of order $p^{1/2+\delta}$ for $\delta>0$ can be connected to a triple with a coordinate that is maximal with respect to its type. Later on we will show that every element in $\hat{G}_p$ can be connected to a triple of maximal order, and that such maximal triples themselves can be connected, implying the connectedness of $\hat{G}_p$.

\begin{prop}[Proposition 7 of \cite{BGS1}]\label{endgame}Let $X$ be a triple with rotation order at least $p^{1/2+\delta}$ for $\delta>0$ fixed. Then $X$ is connected to a maximal triple $Y$.\end{prop}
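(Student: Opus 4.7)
The plan is to exhibit, inside the $\rot_{x_1}$-orbit of $X$, a single triple $Y$ one of whose other two coordinates is already maximal; then $Y$ itself is a maximal triple, and $Y$ is connected to $X$ by the rotations generating the orbit. Let $x_1$ denote the coordinate of $X$ realizing the rotation order $\geq p^{1/2+\delta}$. Using the parametrizations in equations~(\ref{hyperbolicmap}) and~(\ref{ellipticmap}), the orbit takes the form
\[
O_X = \bigl\{\bigl(x_1,\ t + c/t,\ \chi t + c/(\chi t)\bigr) : t \in t_0 H\bigr\},
\]
where $\chi + \chi^{-1} = 3x_1$, the constant $c = ab$ satisfies $c \neq 1$ by~(\ref{notone}), and $H = \langle \chi\rangle$ is a cyclic subgroup of $\F_p^*$ (hyperbolic case) or of $\F_{p^2}^*$ (elliptic case) of order $\geq p^{1/2+\delta}$.

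I would then reduce the problem to a character-sum estimate. A value $y \in \F_p$ is a maximal coordinate exactly when the element $\mu$ defined by $\mu + \mu^{-1} = 3y$ generates the appropriate cyclic group: $\F_p^*$ in the hyperbolic case, the norm-one subgroup of $\F_{p^2}^*$ in the elliptic case. Non-maximality is the event that $\mu$ is a nontrivial $q$-th power for some prime $q$ dividing $p \mp 1$, which is detected by a multiplicative character $\omega$ of order $q$ via the indicator $\frac{1}{q}\sum_{\psi^q = 1}\psi(\mu)$. Substituting $y = t + c/t$ and passing to $\mu(t)$, which is an algebraic function of $t$ on a genus-zero curve, counting non-maximal $t$ reduces to estimating sums of the shape
\[
S_\omega = \sum_{t \in t_0 H} \omega\bigl(\mu(t)\bigr).
\]
Weil's bound (and its $\F_{p^2}$ analogue) gives $|S_\omega| = O(p^{1/2})$ for nontrivial $\omega$.

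Summing these estimates via inclusion--exclusion over the prime divisors of $p^2 - 1$, I would bound the number of non-maximal $t$ in the orbit by
\[
|H|\left(1 - \frac{\phi(p \mp 1)}{p \mp 1}\right) + O\bigl(\tau(p^2-1)\cdot p^{1/2}\bigr).
\]
Using $\phi(n)/n \gg 1/\log\log n$, $\tau(n) = n^{o(1)}$, and $|H| \geq p^{1/2+\delta}$, the count of maximal $t$ is strictly positive (in fact $\gg |H|/\log\log p$), and any such $t$ yields the required maximal triple $Y \in O_X$.

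The main obstacle is obtaining the character-sum bound uniformly over all characters of order $q \mid p^2-1$ while working with the subgroup $H$ rather than the full multiplicative group: Weil's theorem applies cleanly to full-field sums, so either one lifts the sum to $\F_p^*$ via the explicit parametrization (absorbing a bounded degree factor) or invokes sum-product style bounds à la Bourgain--Glibichuk--Konyagin when $H$ is a proper subgroup. The elliptic case needs additional care because $\mu(t)$ lives in $\F_{p^2}$ while only the trace $\mu + \mu^{-1}$ enters the triple, so one must work with the $\F_{p^2}^*$-characters that factor through this trace map. The $p^{\delta}$ slack in the hypothesis is precisely what absorbs the Weil error across the (potentially many) prime divisors of $p^2-1$; without this slack one lands in the regime reflected by the exceptional set $E$ of Theorem~\ref{BGSmain}.
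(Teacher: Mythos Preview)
Your approach is essentially that of the paper: parametrize the orbit, detect maximality of the second coordinate via M\"obius inclusion--exclusion over the divisors of $p\mp1$, and control each piece by a Weil-type estimate---the paper simply phrases the latter as point-counting on the explicit curve $\alpha_1 t^{d_H}+\alpha_2 t^{-d_H}=y^{d_K}+y^{-d_K}$ over the full field (precisely your ``lift the sum to $\F_p^*$'' fix) and applies Hasse--Weil, rather than writing it as a character sum. The Bourgain--Glibichuk--Konyagin alternative you float is unnecessary here, since the lifting already handles the subgroup restriction; sum--product technology only enters in the Middle Game, where the order may be as small as $p^\varepsilon$.
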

As defined above, the triple $X$ can be classified as hyperbolic, elliptic, or parabolic.

The parabolic case is trivial. As we discussed above, we can connect a parabolic triple to an arbitrary coordinate.

Let us first suppose the element is hyperbolic. Then applying (\ref{hyperbolicmap}) to $X$ gives elements of the form
$$(x_1,\alpha_1t+\alpha_2t^{-1},\alpha_3t+\alpha_4t^{-1})$$
Here $\alpha_i\in\F_p^*$ and $t\in H$, where $H$ is some cyclic subgroup of $\F_p^*$. If we want to connect $X$ to a maximal triple by iteratively applying (\ref{hyperbolicmap}), we would like the second coordinate to eventually take the form $\rho+\rho^{-1}$, where $\rho$ is a primitive root of $F_p^*$. The latter is exactly the form of a maximal hyperbolic element.

So let $P(H)$ denote the number of solutions to
\begin{equation}\label{hyperbolicsolutions}\alpha_1t+\alpha_2t^{-1}=\rho+\rho^{-1}\end{equation}
where $\rho$ is a primitive root of $\F_p^*$.

On the other hand, let $K$ be an arbitrary subgroup of $\F_p^*$. Now define $P(H,K)$ to be the number of solutions to (\ref{hyperbolicsolutions}) where we require $\rho\in K$ instead of $\rho$ being a primitive root.

The subgroups $H$ and $K$ are determined by their indices in $\F_p^*$; set $d_K=(p-1)/|K|$ and $d_H=(p-1)/|H|$.

Now suppose $(t,y)$ is a solution to
\begin{equation}\label{powerhyp}\alpha_1t^{d_H}+\alpha_2t^{-d_H}=y^{d_K}+y^{-d_K}\end{equation}
Then the map $(t,y)\mapsto(t^{d_H},y^{d_K})$ sends solutions of (\ref{powerhyp}) to solutions of (\ref{hyperbolicsolutions}); this map is $d_Hd_K$ to 1.

Thus if $N(\alpha_1,\alpha_2)$ is the number of solutions to (\ref{powerhyp}), then
$$P(H,K)=\frac{N(\alpha_1,\alpha_2)}{d_Hd_K}$$
As shown by Lemma 8 of \cite{BGS1}, the curve
$$\alpha_1t^{d_H}+\alpha_2t^{-d_H}-y^{d_K}-y^{-d_K}$$
given by (\ref{powerhyp}) is absolutely irreducible with genus $O(d_Hd_K)$. Thus applying the Hasse-Weil bound for irreducible curves gives
$$N(\alpha_1,\alpha_2)=p+O(d_Hd_K\sqrt{p})$$
which in turn gives
\begin{equation}\label{weiluse}P(H,K)=\frac{p}{d_Hd_K}+O(\sqrt{p})\end{equation}
We now want to express $P(H)$ in terms of $P(H,K)$. We use inclusion/exclusion on $K$ to eventually find all primitive roots. Let $p_i$ be the distinct prime factors of $p-1$. Also let $K_d$ be the subgroup of $\F_p^*$ of index $d$, e.g. $K_1=\F_p^*$ and $K_{p-1}=\{1\}$. Then we have:
\begin{align}P(H)&=P(H,K_1)-\sum_iP(H,K_{p_i})+\sum_{i,j}P(H,K_{p_ip_j})-\cdots\nonumber\\
&=\sum_{d\mid p-1}\mu(d)P(H,K_d)\label{mobius}
\end{align}
Plugging \eqref{weiluse} into \eqref{mobius} gives
\begin{align*}
P(H)&=\sum_{d\mid p-1}\mu(d)\of{\frac{|H|}{d}+O(\sqrt{p})}\\
&=\of{|H|\sum_{d\mid p-1}\frac{\mu(d)}{d}}+O(p^{1/2+\epsilon})\\
&=\of{|H|\frac{\phi(p-1)}{p-1}}+O(p^{1/2+\epsilon})\\
&\ge|H|(p-1)^{-\epsilon}+O(p^{1/2+\epsilon})
\end{align*}
We assumed our initial triple $X$ had order $\ge p^{1/2+\delta}$, i.e. $|H|\ge p^{1/2+\delta}$. Thus $P(H)>1$ and so there exists at least one solution to equation (\ref{hyperbolicsolutions}). This implies that the orbit of this rotation contains a maximal triple and the hyperbolic case is handled.

The elliptic case is covered in detail in Section 3 of \cite{BGS1}. However, as the technical details of their argument are not needed for the paper at hand, we omit them and move on to showing the collection of maximal elements is connected.


\subsubsection{Connectedness of the Cage}

The vertices in $\hat{G}_p$ corresponding to triples of maximal order form a connected component \cite{BGS1}. Consider $C_j(\alpha)\cap C_k(\beta)$ with $j\ne k$, and without loss of generality let $j=1,k=2$. Also suppose $\alpha,\beta\ne0,\pm2/3$. Going forward we sometimes denote $C_1(\alpha)$ as $(\alpha,?,?)$ and $C_1(\alpha)\cap C_2(\beta)$ as $(\alpha,\beta,?)$.

Then
$$|C_1(\alpha)\cap C_2(\beta)|=|(\alpha,\beta,?)|=0,1,2$$
In particular, the intersection consists of all $\gamma$ such that $\alpha^2+\beta^2+\gamma^2-3\alpha\beta\gamma=0$, which has a solution in $\gamma$ if
$$\leg{9\alpha^2\beta^2-4(\alpha^2+\beta^2)}{p}\ge0$$
In particular
$$|(\alpha,\beta,?)|=1+\leg{9\alpha^2\beta^2-4(\alpha^2+\beta^2)}{p}\ge0$$
So consider the \emph{incidence graph} $I(p)$ of $\hat{G}_p$. The vertices of $I(p)$ are $C_j(\alpha)$ and the number of edges between $C_j(\alpha)$ and $C_k(\alpha)$ is $|(\alpha,\beta,?)|$.

\begin{prop}[Proposition 6 of \cite{BGS1}]\label{incprop}
For $p>10$, the incidence graph is connected and in fact has diameter 2.
\end{prop}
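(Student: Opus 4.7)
The plan is to show any two vertices $C_j(\alpha)$ and $C_k(\beta)$ of $I(p)$ lie at distance at most $2$, by producing either a direct edge or a common neighbor. The edge-count formula derived just before the statement,
\[
|C_j(\alpha) \cap C_k(\beta)| = 1 + \leg{9\alpha^2\beta^2 - 4(\alpha^2+\beta^2)}{p},
\]
handles the adjacent case immediately ($j\ne k$ with the discriminant a nonzero square). Otherwise I look for $\gamma\in\mathbb{F}_p$ and an index $l$ such that $C_l(\gamma)$ meets both $C_j(\alpha)$ and $C_k(\beta)$. After permuting coordinates this is equivalent, in every case (including $j=k$), to finding $\gamma$ for which both quadratics
\[
f_\alpha(\gamma) := 9\alpha^2\gamma^2 - 4(\alpha^2+\gamma^2), \qquad g_\beta(\gamma) := 9\beta^2\gamma^2 - 4(\beta^2+\gamma^2),
\]
are squares (or zero) in $\mathbb{F}_p$.

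The heart of the argument is a character-sum estimate. Writing $\chi$ for the Legendre symbol and letting $N$ be the number of good $\gamma$,
\[
4N = \sum_{\gamma\in\mathbb{F}_p}\bigl(1+\chi(f_\alpha(\gamma))\bigr)\bigl(1+\chi(g_\beta(\gamma))\bigr) + O(1),
\]
with the $O(1)$ term absorbing the at most four roots of $f_\alpha g_\beta$. Expanding produces a main term $p$; the two quadratic sums $\sum_\gamma \chi(f_\alpha(\gamma))$ and $\sum_\gamma \chi(g_\beta(\gamma))$ each evaluate to $\pm 1$ after completing the square; and the cross sum $\sum_\gamma \chi(f_\alpha(\gamma)g_\beta(\gamma))$ is bounded by $3\sqrt{p}$ via the Weil estimate for the hyperelliptic curve $y^2 = f_\alpha(\gamma)g_\beta(\gamma)$, provided the quartic is squarefree. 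Hence $N \geq p/4 - O(\sqrt{p})$, which is positive for all but a small threshold of primes.

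The remaining step is to dispatch the degenerate cases. Comparing coefficients shows $f_\alpha g_\beta$ fails to be squarefree precisely when $\alpha^2=\beta^2$, but in that case $f_\alpha = g_\beta$ and the counting problem collapses to finding $\gamma$ making the single quadratic $f_\alpha(\gamma)$ a square, which yields roughly $p/2$ values of $\gamma$ and is even easier. The boundary values $\alpha,\beta\in\{0,\pm 2/3\}$ are excluded in the setup preceding the statement; when they arise, one appeals directly to the parabolic and singular-line structure of Lemma~\ref{parabolicels} to exhibit a common neighbor by hand.

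The main obstacle is calibrating everything to reach the stated threshold $p>10$ rather than some larger cutoff hidden in the $O(\cdot)$ constants. This requires making the Weil bound $(d-1)\sqrt{p}$ explicit for the specific squarefree quartic $f_\alpha g_\beta$, accounting precisely for the zero contributions in the initial expansion of $4N$, and checking by direct enumeration the handful of small primes where the linear-in-$\sqrt{p}$ slack is too tight. Once that bookkeeping is in hand, positivity of $N$ furnishes the required common neighbor; the diameter is exactly $2$ (rather than $1$) because edges of $I(p)$ respect the partition by coordinate index, so vertices $C_j(\alpha)$ and $C_j(\beta)$ sharing a first index can never be directly adjacent.
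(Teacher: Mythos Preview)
Your proposal is correct and follows essentially the same route as the paper: reduce to finding a $\gamma$ making the two quadratics $f_\alpha(\gamma)$ and $g_\beta(\gamma)$ simultaneously squares, treat the degenerate case $\alpha^2=\beta^2$ separately, and otherwise appeal to a Weil-type bound. The paper packages that last step geometrically---the pair of equations $(9\alpha^2-4)\gamma^2-\lambda^2=4\alpha^2$, $(9\beta^2-4)\gamma^2-\mu^2=4\beta^2$ is declared an irreducible curve with an $\mathbb F_p$-point for $p>10$---whereas you unfold it into an explicit character sum via the hyperelliptic quotient $y^2=f_\alpha(\gamma)g_\beta(\gamma)$; these are standard equivalent presentations of the same count, with yours making the constants behind the $p>10$ threshold more transparent.
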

\begin{proof}
We want to connect $C_1(\alpha)$ and $C_2(\beta)$. Thus we want to find $\gamma$ such that both $(\alpha,?,\gamma)$ and $(?,\beta,\gamma)$ are nonempty. So suppose there is a point $(\alpha,l,\gamma)$; solve the quadratic in the second coordinate to see that we must have
$$9\alpha^2\gamma^2-4\alpha^2-4\gamma^2=\lambda^2$$
for some $\lambda$. Similarly we must have that
$$9\beta^2\gamma^2-4\beta^2-4\gamma^2=\mu^2$$
for some $\mu$. Rearrange the two equations into the system
\begin{equation}\label{incidence}\begin{cases}
(9\alpha^2-4)\gamma^2-\lambda^2=4\alpha^2\\
(9\beta^2-4)\gamma^2-\mu^2=4\beta^2
\end{cases}\end{equation}
If $\alpha^2=\beta^2$, then we just take $\lambda=\mu$, and we can reduce \eqref{incidence} to one equation and find an explicit value for $\gamma$. Otherwise \eqref{incidence} is an irreducible curve for which we know a solution in $\gamma$ exists for $p>10$. So the diameter of the incidence graph is at most 2. But of course $C_1(\alpha)$ is not connected to $C_1(\beta)$ if $\alpha\ne\beta$. Thus the diameter is precisely 2.
\end{proof}
Define the \emph{cage} to be the subset of maximal triples. We claim the cage is connected, i.e. path-connected.

Suppose $X$ is a maximal triple with maximal coordinate $\alpha$, say $X=(\alpha,?,?)$. Suppose $Y$ is a maximal triple with maximal coordinate $\beta$, say $Y=(?,\beta,?)$. By Proposition \ref{incprop}, we know there exists a $\gamma$ such that both $(\alpha,?,\gamma)$ and $(?,\beta,\gamma)$ are nonempty. However, we need $\gamma$ to have maximal order:
\begin{center}\begin{tikzcd}[column sep=large]
(\alpha,?,?)\arrow[dash,r,"\alpha\text{ maximal}"]&(\alpha,?,\gamma)\arrow[dash,r,"\gamma\text{ maximal}"]&(?,\beta,\gamma)\arrow[dash,r,"\beta\text{ maximal}"]&(?,\beta,?)
\end{tikzcd}\end{center}
The paper of Bourgain, Gamburd, and Sarnak \cite{BGS1} finishes the proof to guarantee the existence of such a maximal $\gamma$. Thus the cage is connected. We now illuminate this approach through a concrete example.

\subsubsection{Constructive Example}
Let's now walk through a simple example to show how vertices are connected using the BGS algorithm. Take $p=17$. We have the elements along with their order and type in the following table.
\begin{center}\begin{tabular}{c|c|c}
Element&Order&Type\\
\hline
0&4&parabolic\\
1&18&elliptic\\
2&8&hyperbolic\\
3&16&hyperbolic\\
4&16&hyperbolic\\
5&34&parabolic\\
6&6&elliptic\\
7&18&elliptic\\
8&9&elliptic\\
9&18&elliptic\\
10&9&elliptic\\
11&3&elliptic\\
12&17&parabolic\\
13&16&hyperbolic\\
14&16&hyperbolic\\
15&8&hyperbolic\\
16&9&elliptic\\
\end{tabular}\end{center}
Consider the triple $X=(15,0,8)\in G_{17}$. This triple is not maximal, but it does have order $>p^{1/2+\delta}$. By Lemma \ref{endgame}, we should be able to connect $X$ to the cage through rotations of its maximal element.

Since the coordinate 8 has the highest order, we consider $\rot_8$ applied to $X$:
$$(15,0,8)\mapsto(0,2,8)\mapsto(2,14,8)\mapsto(14,11,8)\mapsto(11,12,8)\mapsto(12,5,8)\mapsto$$
$$(5,6,8)\mapsto(6,3,8)\mapsto(3,15,8)\mapsto(15,0,8)$$
which is just a shuffle of the coordinates
$$15-0-2-14-11-12-5-6-3-15$$
for which 14 and 3 are maximal hyperbolic, and 12 and 5 are maximal parabolic. Thus we can connect $X$ to the cage in a number of ways.

A visual representation of this rotation within the plane $C_3(8)$ is given in Figure \ref{rot8}.

\begin{figure}[h!]
\centering
\includegraphics[scale=.22]{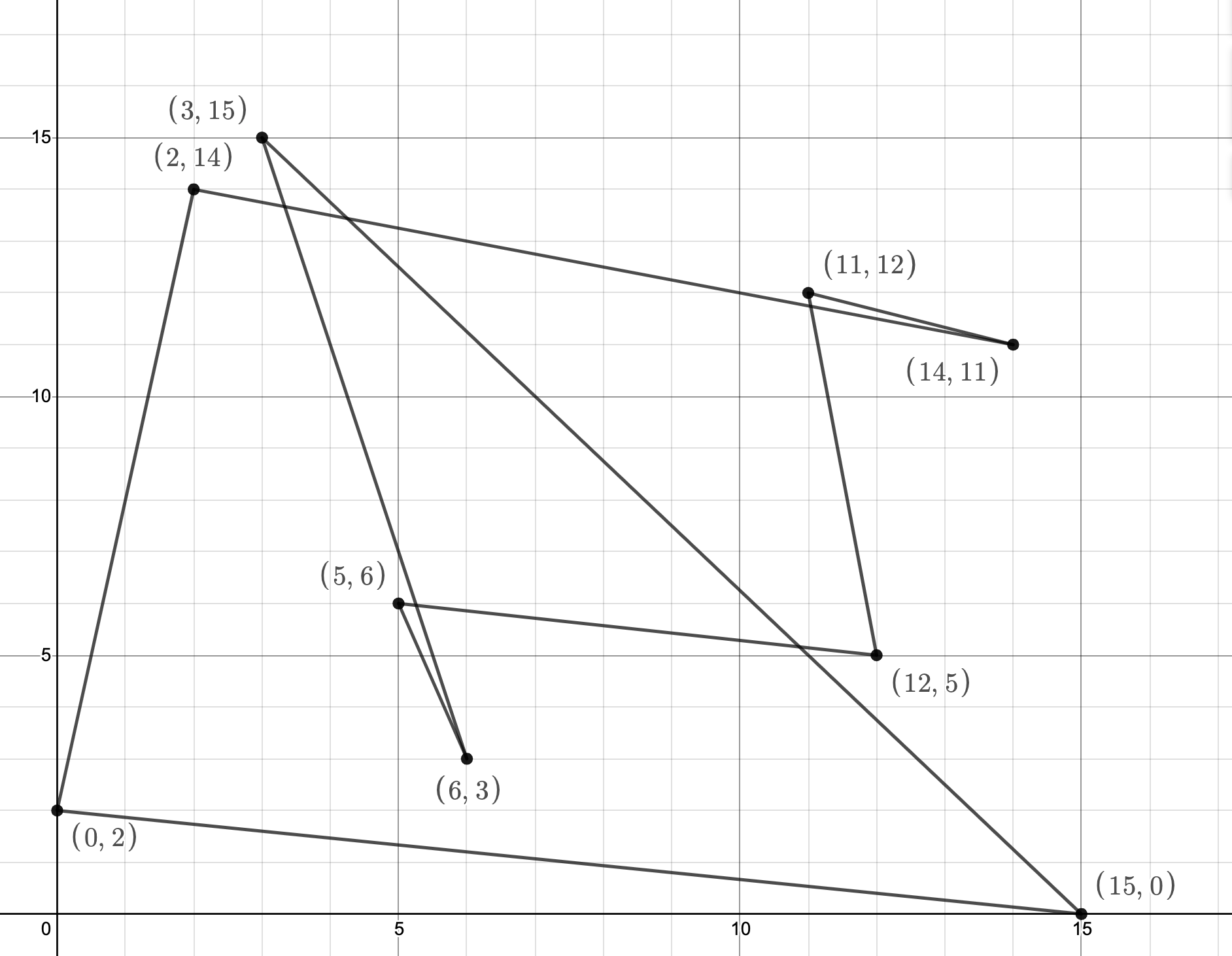}
\caption{$\rot_8$ applied to $(15,0,8)$ in the plane $C_3(8)$.}
\label{rot8}
\end{figure}
\subsection{The Middle Game and The Opening}
In this section we aim to show that any triple $X$ of small order can be  connected to the cage in a finite number of moves. By small order, we mean triples $X$ whose order is $p^\varepsilon$ (which we refer to as \textit{The Middle Game}) or those whose order is less than $p<c$ for some constant $c$, i.e. points whose orders are uniformly bounded independent of $p$ (which we refer to as \textit{The Opening}).

We first handle the Middle Game in detail, and then outline how the Opening comes into play. In particular, we connect a triple of order $p^\varepsilon$ to the cage by showing that one can connect it to a triple whose order is strictly greater than that of the original triple, and then iterate the process until we have a triple of order $p^{1/2+\delta}$ and we are in the End Game. This is done via the following procedure:

Define the \emph{maximal orbit} $M_X$ of a triple $X$ as the orbit corresponding to the rotation of the maximal coordinate of $X$. So if $X$ is a triple with order $l$, then $|M_X|=l$. Any orbit occurs with respect to either the first, second, or third coordinate; we call this number the index of the orbit.
\begin{enumerate}
\item Let $Y\in M_X$ and $l_Y$ be the order of $Y$. Of course $l,l_Y\mid p^2-1$.
\item If $l_Y>l$, then necessarily the index of $M_Y$ is not equal to the index of $M_X$. Then replace $X$ with $Y$, thereby strictly increasing the order of $X$.
\item Otherwise, $l_Y\le l$. Consider the sum
\begin{equation}\label{middlegamesum}N_{l}=\sum_{l'\le l}\#\{Y \in M_X:l_Y=l'\}\end{equation}
If $N_{l} < l$, then there must be a point $Z\in M_X$ whose order is strictly greater than that of $X$. We can then replace $X$ with $Z$ and repeat this process until we arrive at an element with order at least $p^{1/2+\delta}$, which is a reduction to the endgame. This must happen because the order strictly increases at each step and $p^2-1$ has finitely many divisors.
\end{enumerate}
Now we need to effectively bound $N_l$. As seen in the endgame, every $Y$ (with order $l_Y\mid p^2-1$) in the maximal orbit of $X$ corresponds to a solution of the equation:
\begin{equation}\label{bgs43}
\begin{cases}
h_1+\frac{\sigma}{h_1}=h_2+\frac{1}{h_2}\\
h_1 \in H_1, h_2 \in H_2\\
H_1,H_2 \text{ subgroups of } \F_p^* \text{ or } \F_{p^2}^*\\
\sigma\in\F_p
\end{cases}
\end{equation}
where $|H_1|=l$ and $|H_2|=l_Y$. Also, from equation \eqref{notone}, we have that $\sigma\neq1$. So we see that in fact $N_l$ denotes precisely the number of solutions to \eqref{bgs43}, so equivalently we want an upper bound on the number of solutions to \eqref{bgs43}. The following bound is derived in \cite{BGS1}, based off previous work of Bourgain (Proposition 2 in \cite{B10}).
    
\begin{prop}[Proposition 10 of \cite{BGS1}]
Given $\delta > 0$ there is $\tau < 1$ and $C_\tau$ depending on $\delta$ such that if
$p^\delta < |H_1| < p^{1-\delta}$ then the number of solutions to \eqref{bgs43} is at most $C_\tau |H_1|^\tau$.
\end{prop}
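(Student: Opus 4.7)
The plan is to reduce the counting directly to an application of Bourgain's Proposition~2 in \cite{B10}, which handles precisely this type of multiplicatively-constrained equation. First, I would clear denominators in \eqref{bgs43} by multiplying through by $h_1 h_2$ to get $h_1 h_2 (h_1 - h_2) = h_1 - \sigma h_2$, so each solution is a pair $(h_1,h_2) \in H_1 \times H_2$ on this low-degree curve. The hypothesis $\sigma \neq 1$ is essential: if $\sigma = 1$, this factors as $(h_1 - h_2)(h_1 h_2 - 1) = 0$, yielding on the order of $|H_1| + |H_2|$ trivial solutions for which no non-trivial bound is possible.

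Second, I would repackage solutions by the common value $\lambda = h_1 + \sigma/h_1 = h_2 + 1/h_2$. For each $\lambda$, each equation is a quadratic, hence has at most two solutions, so the total count is at most $4 \, |A \cap B|$, where $A = \{h + \sigma/h : h \in H_1\}$ and $B = \{h + 1/h : h \in H_2\}$. The task becomes bounding how much the ``trace-like'' images of two multiplicative subgroups under twisted rational maps can overlap. The role of $\sigma \neq 1$ is to guarantee that the two maps are genuinely different, so that $A$ and $B$ are not structurally forced to coincide.

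Third, I would invoke Proposition~2 of \cite{B10} directly. That result addresses equations precisely of the shape $\alpha h_1 + \beta/h_1 = \gamma h_2 + \delta/h_2$, with $h_i$ ranging over multiplicative subgroups of intermediate size in $\F_p^*$ (or $\F_{p^2}^*$), under a non-degeneracy condition on the coefficients (met here by $\sigma \neq 1$). Its output is a bound of the form $C_\tau |H_1|^\tau$ with $\tau < 1$ depending only on $\delta$, which is exactly the claim.

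The main obstacle is the Bourgain estimate itself, which rests on deep additive combinatorics: the Bourgain-Glibichuk-Konyagin sum-product theorem, which forbids multiplicative subgroups of $\F_p^*$ of intermediate size from having small sumsets. Heuristically, if $|A \cap B|$ exceeded $|H_1|^\tau$, one could pull this back to an unusual amount of additive structure on $H_1$ or $H_2$, contradicting the sum-product bound. Extracting an explicit $\tau < 1$ with the right dependence on $\delta$, rather than merely $\tau = 1 - o(1)$, is the quantitative heart of the argument and is where the real technical work lies — both in the proof of Bourgain's proposition and in verifying that the non-degeneracy hypotheses transfer cleanly when $H_2$ lives in $\F_{p^2}^*$ (the elliptic case).
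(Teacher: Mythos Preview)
Your approach is correct and matches the paper's: the paper does not give its own proof of this proposition but simply cites it from \cite{BGS1}, noting explicitly that the bound there is ``based off previous work of Bourgain (Proposition 2 in \cite{B10}).'' Your reduction to that proposition---clearing denominators, observing that $\sigma\neq1$ is the non-degeneracy condition, and then invoking the sum-product-based estimate---is exactly the route indicated.
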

From this proposition, we simply deduce
$$N_l\le C_\tau|H_1|^\tau=C_\tau l^\tau$$
which provides a necessary upper bound to the number of solutions of \eqref{bgs43}, as desired.

Thus any triple of order at least $p^\epsilon$ can be connected to the cage, and so all triples of order at least $p^\epsilon$ are connected. This algorithm is essential for our cryptographic constructions, and provides the backbone to the first step in connecting two triples $X$ and $Y$ as discussed in Section \ref{cryptheur}.

Next we consider the part of the BGS algorithm that is called ``the Opening," in \cite{BGS1}: that is, the rest of the points in $\hat{G}_p$ whose order is less than $p<c$ for some constant $c$, that is points whose orders are uniformly bounded.

In the Opening section of \cite{BGS1}, Bourgain, Gamburd, and Sarnak prove that one can connect triples with uniformly bounded orders to the cage to conclude that the Markoff graph mod $p$ is connected; however their methods are non-constructive. To go about this, they look at the characteristic 0 case and show that there are no finite $\Gamma$-orbits. As this method is not needed in our cryptographic analysis of $\hat{G}_p$, we omit the technicalities and direct the interested reader to Section 5 of \cite{BGS1} for a comprehensive analysis of the Opening. 

The proof of Theorem \ref{BGSmain} presented throughout Section \ref{BGSattack} provides us with an algorithmic approach to finding paths in $\hat{G}_p$, thus establishing connectivity of $\hat{G}_p$. This method need not be optimal but the cryptographic analysis in Section \ref{cryptheur} elucidates the strength of the cryptosystem against the BGS-style attack. We now look at another possible avenue of path-finding based off lifting solutions to $\Z$ and exploiting the structure of the Markoff tree. 

\section{Attack by Lifting}\label{Attack2}
The main observation behind our plan of attack is the following lemma.

\begin{lem}
Let $(x_1,x_2,x_3)$ be a Markoff triple in $\mathbb{Z}^3$ whose $i$-th coordinate $x_i$ is maximal, and $x_i > 1$. Then applying $R_i$ to the triple decreases the size of the $i$-th entry. Formally, suppose $|x_i|\geq |x_k|$ for all $1\leq k\leq3$ in the Markoff triple $(x_1,x_2,x_3)$.  Let $(x_{1,i},x_{2,i},x_{3,i})$ be the triple obtained from applying the $i$-th involution $R_i$ to the triple: $$(x_{1,i},x_{2,i},x_{3,i}):=R_i(x_1,x_2,x_3).$$  
Then $|x_{i,i}|<|x_i|$.
\end{lem}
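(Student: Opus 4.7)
The plan is a Vieta-jumping argument, treating equation (\ref{markoffeq}) as a quadratic in $x_i$. Using the sign-change symmetries of the Markoff equation (negating any two coordinates preserves it), I first reduce to the case where $x_1, x_2, x_3$ are all positive, so absolute values equal values. Writing $\{j,k\} = \{1,2,3\}\setminus\{i\}$, equation (\ref{markoffeq}) says that $x_i$ is a root of
\[ p(t) := t^2 - 3x_jx_k\,t + (x_j^2 + x_k^2). \]
The other root is precisely $x_{i,i} = 3x_jx_k - x_i$, and by Vieta $x_i\cdot x_{i,i} = x_j^2 + x_k^2 > 0$, so $x_{i,i}$ is positive. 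It therefore suffices to show that $x_i$ is the strictly larger of the two roots.

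The key calculation is to evaluate $p$ at $M := \max(x_j, x_k)$. With $m := \min(x_j, x_k)$, a direct expansion gives
\[ p(M) = M^2 - 3mM^2 + m^2 + M^2 = m^2 + M^2(2 - 3m). \]
Since $m \geq 1$, the factor $2 - 3m \leq -1$, so $p(M) \leq m^2 - M^2 \leq 0$, with equality only when $m = M = 1$. Hence $M$ lies (weakly) between the two roots of $p$, and because $x_i \geq M$ by maximality, $x_i$ must be the larger root.

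To upgrade to strict inequality I handle the boundary case. If $p(M) = 0$ then $x_j = x_k = 1$, so $p(t) = (t-1)(t-2)$ and the triple is either $(1,1,1)$ or (a permutation of) $(1,1,2)$; the hypothesis $x_i > 1$ excludes the former, while for the latter $x_i = 2$ is the larger root and $x_{i,i} = 1$. In all other cases $p(M) < 0$, so $M$ lies strictly between the roots; the roots are distinct because for positive integers $x_j, x_k \geq 1$ the estimate $x_j^2 + x_k^2 \leq 2 x_j^2 x_k^2$ forces the discriminant $9 x_j^2 x_k^2 - 4(x_j^2 + x_k^2)$ to be strictly positive. The only subtle point, and precisely what the hypothesis $x_i > 1$ is designed to avoid, is that at $(1,1,1)$ the maximum coordinate actually coincides with the \emph{smaller} root of $p$, so the conclusion would fail; excluding this single degenerate triple makes the argument go through cleanly.
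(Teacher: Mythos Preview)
Your proof is correct, and it takes a genuinely different route from the paper's. The paper argues by direct inequality manipulation: after observing (from the classification of Markoff triples with repeated coordinates) that $|x_i|$ must be \emph{strictly} larger than $|x_j|,|x_k|$, it rewrites $2x_i-3x_jx_k=(x_i^2-x_j^2-x_k^2)/x_i$ using the Markoff relation and then shows $3x_ix_jx_k>2x_j^2+2x_k^2$ by chaining the strict inequalities $x_i>|x_k|\ge|x_j|$. Your Vieta-jumping argument is cleaner in two respects: the sign-symmetry reduction to all-positive coordinates replaces the paper's separate treatment of $x_i>0$ and $x_i<0$, and evaluating the quadratic $p$ at $M=\max(x_j,x_k)$ avoids needing the strict-maximality lemma altogether (you only use $x_i\ge M$). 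The paper's approach, on the other hand, is slightly more self-contained in that it never appeals to the sign symmetry and keeps all estimates in terms of the original coordinates. Both are elementary; your discriminant remark is harmless but redundant, since $p(M)<0$ already forces distinct roots.
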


\begin{proof}
Let $x_j,x_k$ be the other two coordinates of the triple $(x_1,x_2,x_3)$ besides $x_i$. Note that, since $|x_i|>1$, it is impossible for $(x_i,x_j,x_k)$ to satisfy (\ref{markoffeq}) if $|x_i|=|x_j|=|x_k|$.  In fact, in this case we have that $|x_i|$ must be strictly larger than $|x_j|$ and $|x_k|$ in order for (\ref{markoffeq}) to be true.  Suppose further without loss of generality that $|x_j|\leq |x_k|$.

We have $x_{i,i}=3x_jx_k-x_i$.  If $x_i>0$, then $x_jx_k>0$ in order for (\ref{markoffeq}) to be satisfied, and, again by (\ref{markoffeq}) we have $$3x_jx_k=(x_1^2+x_2^2+x_3^2)/x_i>x_i,$$ so that $$|x_{i,i}|=|3x_jx_k-x_i|=3x_jx_k-x_i.$$
Our goal is hence to show that $2x_i-3x_jx_k>0$.  We have by (\ref{markoffeq}) that
$$2x_i-3x_jx_k=2x_i-\frac{x_i^2+x_j^2+x_k^2}{x_i}=\frac{x_i^2-x_j^2-x_k^2}{x_i},$$
which, given that $x_i>0$, is positive if and only if the numerator is positive.  Rewrite the numerator as
$$x_i^2+x_j^2+x_k^2-(2x_j^2+2x_k^2)$$
and compare with the left side of (\ref{markoffeq}).  We claim that $2x_j^2+2x_k^2<3x_ix_jx_k$, which would imply that the numerator above is positive as desired.

It remains to prove our claim.  Given that $x_i>|x_j|$ and $|x_k|\geq|x_j|$, we have
$$3x_ix_jx_k>x_ix_jx_j+2|x_k|x_jx_k\geq 2x_j^2+2x_k^2$$
as desired where the last inequality is true since $x_i\geq 2$ and $|x_j|\geq 1$.  So, if $x_i>0$ we are done.

If $x_i<0$ the argument is nearly identical.  We would have that $x_{i,i}<0$ in that case, and so our goal would be to show that $-3x_jx_k+x_i<-x_i,$ or that $2x_i-3x_jx_k<0$.  Given that $(-x_i,-x_j,x_k)$ is a triple satisfying the properties in the first case above where $x_i>0$, the argument above shows that $-2x_i+3x_jx_k>0$, which is exactly what we need. 
\end{proof}

This lemma gives a very straightforward way of finding a path from any triple $(x_1,x_2,x_3)$ in the tree to the triple that is the ``origin," or $(1,1,1)$ in absolute value, which in turn gives a simple way of finding a path between any two vertices in the tree.  Thus if triples can be efficiently lifted from the graph $G_p$ to the tree, this algorithm gives a path-finding attack on the graph.  The algorithm is as follows.  Start with $W=I$, the identity.
\begin{enumerate}
\item If $(|x_1|,|x_2|,|x_3|)=(1,1,1)$ then we are done, and $W$ is the word that describes the path from $(x_1,x_2,x_3)$ to the origin. If not, determine $i$ such that the $i$-th coordinate of $(|x_1|,|x_2|,|x_3|)$ is largest. Go to step 2.
\item Replace $(x_1,x_2,x_3)$ with $R_i((x_1,x_2,x_3))$, replace $W$ with $WR_i$, and go to step 1.
\end{enumerate}
By the lemma, this algorithm will continuously decrease every largest coordinate in absolute value until each coordinate is $1$ in absolute value.  For example, for the triple $(29,-169,-14701)$ it gives
$$(29,-169,-14701)\xrightarrow{R_3}(29,  -169,    -2)\xrightarrow{R_2}(29,    -5,    -2)\xrightarrow{R_1}(1,    -5,    -2)$$
$$\xrightarrow{R_2}(1,    -1,    -2)\xrightarrow{R_3}(1,-1,-1).$$
Coming back to our problem of finding paths between two points in the graph $G_p$,, if our attacker is able to take a triple $(x_1',x_2',x_3')$ which satisfies the Markoff equation modulo $p$ and lift it to a solution $(x_1,x_2,x_3)$ to the Markoff equation in $\mathbb Z$, then she need only run the algorithm above to find a path from $(x_1,x_2,x_3)$ to the origin in which every coordinate is $1$ in absolute value in order to find a path from $(x_1',x_2',x_3')$ to the origin in $G_p$ (it is the path corresponding to the same word as the one she will obtain from the above algorithm).  

However, so far it appears that finding a Markoff triple that reduces to $(x_1',x_2',x_3')$ modulo $p$ is difficult for most candidate $(x_1',x_2',x_3')$'s. The reason for this is that, according to \cite{Z}, the number of Markoff triples in which the largest coordinate is at most $T$ is asymptotic to $C(\log T)^2$ for some constant $C$, while the number of vertices in $G_p\sim p^2$. So in order to have a chance of covering all possible mod-$p$ Markoff triples coming from $G_p$ by Markoff triples over $\mathbb Z$, one must consider all those triples less than $T$ where
$$C(\log T)^2\geq p^2,$$
or, in other words, where $T$ is of size roughly $e^p$. More likely, $T$ will have to be much larger than that, since it is not at all true that all Markoff numbers less than $T$ reduce to a different triple modulo $p$. Even with this estimate of $e^p$, however, one sees that the lifts will probably be very large (since $p$ itself will be taken to be large), and certainly no straightforward search for a lift in $\mathbb Z$ will be computationally feasible.

Constructing a collision attack from lifting is almost equivalent to path finding. If one has a method of efficiently finding lifts to $\Z$, two lifts of the same triple could result in two distinct paths between triples. Unless the two paths in $\Z$ overlap nontrivially, we would have a collision starting with $(0,0,0)$.

\section{Other Possible Attacks and Future Avenues for Research}\label{otherattacks}
We note that the BGS algorithm can be slightly modified to search for collision resistance. Currently, the steps of the middle game are deterministic in connecting a triple to the cage; the rotations are always done on the maximal coordinate. For a collision attack, we would search for two distinct paths between points. So instead of always choosing the maximal coordinate, we can randomly choose coordinates instead (not necessarily uniformly). If we eventually arrive at the cage, then we have found another distinct path, since the cage is connected. Of course, there is no proof, other than empiricism, that any method other than choosing the maximal coordinate will succeed in a similar way.

Many potential attacks involve finding small cycles on $G_p$ or $\hat{G}_p$, e.g. some adaptation of the Pollard rho algorithm. There are a number of reasons we believe such a study is unfruitful. A Pollard-style attempt would look for cycles by repeatedly applying a single involution. The construction of $G_p$ means that such short cycles occur with vanishingly little frequency, as discussed in the Opening. In any case, such discrete logarithm attacks must involve at least $\Omega(\sqrt{p})$ group operations \cite{S}, which is not a significant improvement.

Nonetheless, it will certainly be important to understand better the distribution of cycle lengths in a graph $G_p$ or $\hat{G}_p$. While it is known that small cycles in $\hat{G}_p$ exist, it is not known how common they are, and how likely one is to run into one in practice. Even less is known about the cycles in the graph $G_p$. This is a problem the authors hope to explore in a future paper.

In addition, it would be helpful to have a better picture of the size of an average lift of a Markoff triple mod $p$ to one over $\mathbb  Z$, so that we can further understand the potential for success of the lifting attack described in Section~\ref{Attack2}. This is currently being studied by the first-named author together with co-authors E.~Bellah,  S.~Kim, D.~Schindler, J.~Sivaraman, and L.~Ye.


\begin{thebibliography}{}
\bibitem[BH]{BH}\label{BH} Laszlo Babai, Thomas P. Hayes. The probability of generating the symmetric group when one of the generators is random. Publ. Math. Debrecen, {\bf{69}} No. 3 (2006), pp. 271-280.

\bibitem[B]{B}\label{B10}
Jean Bourgain. A modular Szemeredi-Trotter theorem for hyperbolas. arXiv:1208.4008.

\bibitem[BGS1]{BGS1}\label{BGS1} 
Jean Bourgain, Alexander Gamburd, Peter Sarnak. Markoff Surfaces and Strong Approximation: 1. arXiv:1607.01530.

\bibitem[BGS2]{BGS2}\label{BGS2} 
Jean Bourgain, Alexander Gamburd, Peter Sarnak. Markoff Surfaces and Strong Approximation. arXiv:1505.06411.

\bibitem[CGL]{CGL}\label{CGL}
Denis X. Charles, Eyal Z. Goren, and Kristin E. Lauter. Cryptographic hash functions from expander graphs. 
J. Cryptology, Vol. 22 (1), (2009) 93--113. eprint.iacr.org/2006/021

\bibitem[Ch]{Ch}\label{Ch}
William Chen, Nonabelian level structures, Nielsen equivalence, and Markoff triples, preprint (2021) https://static1.squarespace.com/static/59b0d0048419c2e19a207ba7/ t/608608553348ca47ee4941e3/1619396694639/congruence.pdf. 

\bibitem[CFLMP]{CFLMP}\label{CFLMP}
Anamaria Costache, Brooke Feigon, Kristin Lauter, Maike Massierer and Anna Puskas. Ramanujan graphs in cryptography. In: Research Directions in Number Theory: Women in Numbers IV, Association for Women in Mathematics Series, Vol. 19, pp. 1--40 (2019) Springer.  
\bibitem[C]{C}\label{C}
Craig Costello. B-SIDH: supersingular isogeny Diffie-Hellman using twisted torsion, In: International Conference on the Theory and Application of Cryptology and Information Security: Advances in Cryptology -- ASIACRYPT 2020 (2020), pp. 440-463.

\bibitem[dCM]{dCM}\label{dCM}
Matthew de Courcy-Ireland, Michael Magee. Kesten-McKay law for the Markoff surface mod $p$. arXiv:1811.00113.

\bibitem[FKLPW]{FKLPW}\label{FKLPW}
Luca De Feo, David Kohel, Antonin Leroux, Christophe Petit, and Benjamin Wesolowski. SQISign: compact post-quantum signatures from quaternions and isogenies, In: International Conference on the Theory and Application of Cryptology and Information Security: Advances in Cryptology -- ASIACRYPT 2020 (2020), pp. 64--93.

\bibitem[JFP]{JFP} \label{JFP}
David Jao, Luca De Feo, and J\'{e}r\^{o}me Pl\^{u}t, Towards quantum-resistant cryptosystems from supersingular elliptic curve isogenies, J. Math. Cryptol. 8 (2014), no. 3, pp. 209--247.

\bibitem[LT]{LT}\label{LT}
Oleg Lisovyy and Yuriy Tykhyy, Algebraic solutions of the sixth Painlevé equation,
Journal of Geometry and Physics, Volume 85 
(2014), pp. 124--163,

\bibitem[M1]{M1}\label{Ma1}
Andrey Markoff. Sur les formes quadratiques binaires indéfinies, Math. Ann. 15 (1879) 381–409.

\bibitem[M2]{M2}\label{Ma2}
Andrey Markoff. Sur les formes quadratiques binaires indéfinies, Math. Ann. 17 (1880) 379–399.

\bibitem[MP]{MP}\label{MP} Chen Meiri, Doron Puder with an Appendix by Dan Carmon. The Markoff Group of Transformations in Prime and Composite Moduli. Duke Math J. {\bf{167}} No. 14 (2018) pp. 2679--2720.

\bibitem[PLQ]{PLQ} \label{PLQ}
Christophe Petit, Kristin Lauter, and Jean-Jacques Quisquater. Full cryptanalysis of LPS and Morgenstern hash functions, Security and Cryptography for Networks 2008, pp. 263--277, Springer
Berlin Heidelberg.

\bibitem[RS]{RS}\label{RS} Michelle Rabideau and Ralf Schiffler. Continued fractions and orderings on the Markov Numbers. arXiv:1801.07155v2.

\bibitem[S]{S}\label{S} Victor Shoup. Lower Bounds for Discrete Logarithms and Related Problems. EUROCRYPT 1997. Lecture Notes in Computer Science, vol 1233. Springer.

\bibitem[TZ]{TZ} \label{TZ}
Jean-Pierre Tillich and Gilles Zémor.
Collisions for the LPS Expander Graph Hash Function,
Advances in Cryptology - EUROCRYPT 2008, Lecture Notes in Computer Science, Vol 4965, pp. 254--269, Springer.

\bibitem[Z]{Z}\label{Z} Don Zagier. Markoff numbers below a given bound, Mathematics of computation 39 No. 160 (1982) 709-723.


\end{thebibliography}
\end{document}